%% file: ipl2026.tex
\documentclass[a4paper,fleqn]{cas-dc}

\usepackage[authoryear,longnamesfirst]{natbib}

\input{packages}
\input{macros}

\begin{document}
\let\WriteBookmarks\relax
\def\floatpagepagefraction{1}
\def\textpagefraction{.001}

\shorttitle{Satisfiability for Knowing How over Linear Plans is NP-complete}    

\shortauthors{Areces, Barcel\'o, Cassano, Castro, Demri \& Fervari}  

\title [mode = title]{Satisfiability for Knowing How over Linear Plans is NP-complete}  



%

\author[1,2]{Carlos Areces}[orcid=0000-0001-7845-8503]

\author[3]{Pablo Barcel\'o}[orcid=0000-0003-2293-2653]

\author[1,4]{Valentin Cassano}[orcid=0000-0001-5904-3038]

\author[1,4]{Pablo F. Castro}[orcid=0000-0002-5835-4333]

\author[5]{St\'ephane Demri}[orcid=0000-0002-3493-2610]

\author[1,2]{Raul Fervari}[orcid=0000-0003-0360-0725]
\ead{rfervari@unc.edu.ar}
\cormark[1]

\affiliation[1]{organization={Consejo Nacional de Investigaciones Cient\'ificas y T\'ecnicas, CONICET},
                addressline={Godoy Cruz 2290}, 
                postcode={C1425FQB}, 
                postcodesep={}, 
                city={CABA},
                country={Argentina}}

\affiliation[2]{organization={Universidad Nacional de C\'ordoba, FAMAF},
                addressline={Medina Allende s/n}, 
                postcode={X5000HUA}, 
                postcodesep={}, 
                city={C\'ordoba},
                country={Argentina}}

\affiliation[3]{organization=Pontificia Universidad Cat\'olica de Chile \& IMFD \&  CENIA,
                 addressline={Campus San Joaquín, Vicuña Mackenna 4860}, 
                postcode={7820436}, 
                postcodesep={}, 
                city={Macul},
                country={Chile}}

\affiliation[4]{organization={Universidad Nacional de R\'io Cuarto},
                addressline={Ruta Nac. 36 - KM. 601}, 
                postcode={X5804BYA}, 
                postcodesep={}, 
                city={R\'io Cuarto},
                country={Argentina}}

\affiliation[5]{organization={Université Paris-Saclay, CNRS, ENS Paris-Saclay, Laboratoire Méthodes Formelles},
                addressline={4 Avenue des Sciences}, 
                postcode={91190}, 
                postcodesep={}, 
                city={Gif-Sur-Yvette},
                country={France}}







\cortext[1]{Corresponding author}



\begin{abstract}
We study the satisfiability problem for a modal logic expressing
\emph{knowing-how} assertions, which captures an agent’s ability to achieve
a given goal under the standard semantics based on linear plans.
Our main result shows that satisfiability of knowing-how formulas is
$\NP$-complete, improving previously known complexity bounds. 
The proof proceeds via a translation into modal logic S5, an 
instrumental tool for addressing a variety of problems in knowledge representation.
\end{abstract}




\begin{keywords}
 Knowing-How \sep Complexity \sep Satisfiability 
\end{keywords}

\maketitle


\section{Introduction}
\label{sec:intro}
\input{intro}

\section{Preliminaries}
\label{sec:basic}
\input{kh}

\section{The Satisfiability Problem of $\KHlogic$ is in NP}
 \label{sec:np}
\input{alternative}

\section{Final Remarks}
\label{sec:final}
\input{final}

\printcredits

\bibliographystyle{cas-model2-names}

\bibliography{references}



\end{document}

%% file: packages.tex
\usepackage[usenames,dvipsnames]{xcolor}
\usepackage{amsmath}
\usepackage{amssymb}
\usepackage{amsthm}
\usepackage{multicol}
\usepackage{xspace}
\usepackage[textsize=scriptsize]{todonotes}
\usepackage{algorithm}
\usepackage{soul}
\usepackage{mdframed}
\usepackage{flushend}
\usepackage[noEnd=true]{algpseudocodex}
  \algnewcommand{\IIf}[1]{\State\algorithmicif\ #1\ \algorithmicthen}
  \algnewcommand{\EndIIf}{\unskip\ \algorithmicend\ \algorithmicif}

  \makeatletter
  \renewcommand{\ALG@beginalgorithmic}{\scriptsize}
  \makeatother
  \algrenewcommand\alglinenumber[1]{\small #1:}

\usepackage{proof}

\usepackage[draft]{fixme}

\usepackage{stmaryrd}
\usepackage{paralist}

\usepackage{tikz}
\usetikzlibrary{positioning}

\usepackage{multirow}

\usepackage{xparse}

\usepackage[all]{xy}
\newdir{|>}{%
  !/4.5pt/@{|}*:(1,-.2)@^{>}*:(1,+.2)@_{>}}

\usepackage[english]{babel}

\usepackage{hyperref}
\hypersetup{
  linkcolor  = blue,
  citecolor  = blue,
  urlcolor   = blue,
  colorlinks = true,
}

\usepackage{mathtools}
\usepackage{forest}
\usepackage{etoolbox}

\usepackage{thm-restate}

%% file: macros.tex
\newtheorem{theorem}{Theorem}
\newtheorem{proposition}{Proposition}
\newtheorem{corollary}{Corollary}
\newtheorem{lemma}{Lemma}

\newtheorem{definition}{Definition}
\newtheorem{example}{Example}

\usepackage{cleveref}

\Crefname{algorithm}{Alg.}{Algs.}
\Crefname{definition}{Def.}{Defs.}
\Crefname{equation}{Eq.}{Eqs.}
\Crefname{figure}{Fig.}{Figs.}
\Crefname{proposition}{Prop.}{Props.}
\Crefname{theorem}{Thm.}{Thms.}
\Crefname{example}{Ex.}{Exs.}
\Crefname{corollary}{Cor.}{Cors.}
\Crefname{section}{Sec.}{Secs.}
\Crefname{appendix}{App.}{Apps.}

\usepackage{tikz}
\usetikzlibrary{arrows,decorations,shapes,automata,positioning,decorations.pathmorphing,patterns}

\tikzset{
  every picture/.style = {
    thick,
    ->,
    >=stealth',
    node distance = 1.5em and 3em,
  }
  ,
  cross line/.style = {
    preaction = {
      draw=white,
      -,
      line width=4pt
    }
  }
  , 
  state/.style = {
    circle,
    font = \footnotesize,
    draw,
    minimum width = 1em,
    minimum height = 1em
  }
  , 
  label-state/.style = {
    sloped,
    font = \scriptsize,
    label distance = -2pt
  }
  , 
  label-edge/.style = {
    font = \scriptsize,
    label distance = -2pt
  }
}

\newcommand{\PROP}{{\rm \sf Prop}\xspace}
\newcommand{\ACT}{{\rm \sf Act}\xspace}

\newcommand{\kh}{{\mathsf{Kh}}}

\newcommand{\liff}{\leftrightarrow}

\newcommand{\A}{{\operatorname{\sf A}}}
\newcommand{\E}{{\operatorname{\sf E}}}

\newcommand{\size}[1]{{|}{#1}{|}}

\newcommand{\KHlogic}{\ensuremath{\mathfrak{L}({\kh})}\xspace}

\newcommand{\Alogic}{\ensuremath{\mathfrak{L}({\A})}\xspace}
\newcommand{\SFive}{\Alogic}

\newcommand{\nmodel}{\mathfrak{N}}
\newcommand{\model}{\mathfrak{M}}

\newcommand{\R}{\operatorname{R}}
\renewcommand{\S}{\operatorname{S}}

\newcommand{\V}{\operatorname{V}}

\newcommand{\plan}{\pi}

\newcommand{\PLANS}{\ACT^{*}}

\newcommand{\lts}{\textup{LTS}\xspace}

\newcommand{\truthset}[2]{\llbracket #2 \rrbracket^{#1}}

\newcommand{\stexec}{\operatorname{SE}}

\newcommand{\card}[1]{{\mid}{#1}{\mid}}
\newcommand{\tup}[1]{\langle{#1}\rangle}

\newcommand{\setof}[2]{\{{#1}\mid {#2}\}}
\newcommand{\set}[1]{\{{#1}\}}

\newcommand{\colornuevo}{teal}

\newcommand{\colornota}{Peach}

\newcommand{\colorincompleto}{red}

\newcommand{\NP}{{\mathsf{NP}}\xspace}

\newcommand{\Poly}{{\mathsf{P}}\xspace}
\newcommand{\PSPACE}{{\mathsf{PSpace}}\xspace}

\newcommand{\zerodisplayskips}{%
  \setlength{\abovedisplayskip}{5pt}%
  \setlength{\belowdisplayskip}{5pt}%
  \setlength{\abovedisplayshortskip}{5pt}%
  \setlength{\belowdisplayshortskip}{5pt}}
\appto{\normalsize}{\zerodisplayskips}
\appto{\small}{\zerodisplayskips}
\appto{\footnotesize}{\zerodisplayskips}

\DeclareMathOperator{\depth}{\mathsf{md}}

\newcommand{\C}[1]{{#1{}}^{*}}

\newcommand{\aformula}{\varphi}
\newcommand{\aformulabis}{\psi}
\newcommand{\aformulater}{\chi}

\newcounter{ieqnnum}
\makeatletter
\newcommand*{\ieqn}[2][]{%
    \begingroup
        \refstepcounter{ieqnnum}%
        (\roman{ieqnnum})\xspace%
        \ifx\\#1\\%
        \else
            \label[ieqn]{#1}
        \fi
        \relpenalty=10000 %
        \binoppenalty=10000 %
        \ensuremath{%
            #2%
        }%
    \endgroup
}
\makeatother

\newcounter{rieqnnum}%
\newcommand{\ieqnref}[1]{%
    \setcounter{rieqnnum}{#1}%
    \setcounter{rieqnnum}{%
        \numexpr\value{ieqnnum}+\value{rieqnnum}%
    }%
    (\roman{rieqnnum})%
}

\newcommand{\defstyle}[1]{\emph{#1}}

\newcommand{\cut}[1]{} 
\newcommand{\Rfamily}{(\R_a)_{a \in \ACT}}

\newcommand{\pair}[2]{(#1,#2)}

%% file: intro.tex

The notion of \emph{knowing-how} captures an agent’s ability to achieve a goal
(see~\cite{fantl2012introduction,Pavese22}) and offers a natural and simple way to reason about
strategies in AI, including automated planning (see, e.g.,~\cite{HBEL}).
The framework introduced in~\cite{Wang15lori,Wang2016} has since become a
standard formalization of this idea.
It extends propositional logic with a binary modality $\kh(\varphi,\psi)$,
interpreted over labelled transition systems (LTSs), expressing that “whenever
$\varphi$ holds, the agent knows how to achieve~$\psi$.”

Semantically, an agent can achieve $\psi$ given $\varphi$ (i.e., $\kh(\varphi,\psi)$ holds) when there exists a
linear plan that is always executable from the states satisfying $\varphi$, it never aborts, and
always reaches states satisfying $\psi$.
A major appeal of the logic is its simplicity: the modality $\kh$ is global and
does not refer to actions or plans explicitly.
For this reason, the proposal from~\cite{Wang15lori,Wang2016} has become
foundational, inspiring several subsequent investigations and variants of knowing-how (see, e.g.,~
\cite{FervariHLW17,Li17,Naumov2018a,ArecesFSV25}).

\paragraph{Contributions.}
In this paper, we establish an optimal upper bound for the satisfiability problem of the
knowing-how logic introduced in~\cite{Wang15lori,Wang2016}, which we denote by
$\KHlogic$. More precisely, we show that this problem is $\NP$-complete,
thus refining the best previously known upper bound presented in~\cite{ACCFS23}. In
that work, satisfiability for $\KHlogic$ is decided by an $\NP^\NP$ algorithm
(also denoted $\Sigma^{\Poly}_2$), corresponding to the second level of the
polynomial hierarchy~\citep{Stock76}. Our result closes this gap and yields, as
a direct consequence, a polynomial-size model property for $\KHlogic$: whenever
a formula $\varphi$ is satisfiable, it is satisfiable over an LTS of polynomial size with respect to the size of $\varphi$.

Technically, we reduce satisfiability of $\KHlogic$ formulas to satisfiability
in propositional logic extended with global universal 
modalities~\citep{GorankoP92,Hemaspaandra96}, which we denote by $\Alogic$.
Our analysis is clearly based on~\cite{ACCFS23}, for which we provide several substantial refinements to get the optimal $\NP$ upper bound. Precisely, herein we manage to directly translate $\KHlogic$ formulas into equisatisfiable $\Alogic$ formulas, thereby eliminating the unsatisfiability checks used in~\cite{ACCFS23}, which prevented us from obtaining an optimal bound.  
The logic $\Alogic$ is a syntactic variant of S5 (see~\cite{mlbook}), whose
satisfiability problem is $\NP$-complete. S5 has been used in several
knowledge-representation tasks, including knowledge
compilation~\citep{Bienvenu&Fargier&Marquis10} and belief
revision~\citep{AVBNH25}, although it has been also studied in its own right
\citep{Salhi&Sioutis15,Niveau&Zanuttini16}.

Thus, we obtain tight complexity bounds for a logic that has recently received
wide attention. Since $\KHlogic$ is strictly more expressive than $\Alogic$ over LTSs,
this extends the frontier of logics with $\NP$-complete satisfiability.
Moreover, model checking for $\KHlogic$ is $\PSPACE$-complete~\citep{DF23}, and
hence potentially harder than satisfiability, a truly uncommon phenomenon.

\paragraph{Outline.} 
The article is organized as follows. 
In~\Cref{sec:basic} we introduce some notations and the basic definitions of $\KHlogic$ and $\Alogic$. \Cref{sec:np} is devoted to incrementally show our complexity result. 
Finally, in~\Cref{sec:final} we provide some remarks and future lines of research.


%% file: kh.tex
\subsection{The logic $\KHlogic$}

In this section, we present the syntax, the models and the semantics 
of the logic of \emph{Knowing How (over Linear Plans)}, which we denote by~$\KHlogic$, introduced first in~\cite{Wang15lori}. 

\paragraph{{\bf Syntax.}}
We assume that $\PROP$ and $\ACT$ are countable sets of \emph{propositional symbols} and \emph{action symbols}, respectively.

\begin{definition}\label{def:khsyntax}
    The set of \emph{formulas} of $\KHlogic$ is defined recursively by the following grammar: 
    \begin{align*}
        \varphi,\psi & \Coloneqq
            p \mid
            {\lnot \varphi} \mid
            {\varphi \lor \psi} \mid
            \kh(\varphi,\psi),
    \end{align*}
    where $p\in\PROP$. 
    We use $\bot$, $\top$, $\varphi \land \psi$, $\varphi \to \psi$, and ${\varphi \liff \psi}$ with their standard meanings.
     \qed
\end{definition}

Formulas of the form $\kh(\varphi,\psi)$, called \emph{knowing-how},  intuitively indicate that ``the agent knows how to achieve the \emph{postcondition} $\psi$ given a \emph{precondition} $\varphi$''.

\paragraph{{\bf Semantics.}}
The formulas of $\KHlogic$ are interpreted over \emph{labelled transition systems} built on a fixed set $\ACT$ of the \emph{basic action} symbols. Each basic action corresponds to a single step that moves the system from one state to another.

\begin{definition}\label{def:lts}
    A \emph{labelled transition system (\lts)}, often referred to as an {\em $\KHlogic$ model}, is a tuple $\model = 
    \tup{\S,\Rfamily,\V}$ such that:
   (1) $\S$ is a non-empty set of \emph{states};
    (2) $\Rfamily$ is a family of binary relations on $\S$; and
    (3) $\V: \PROP \to 2^{\S}$ is a \emph{valuation function}.
   \qed
\end{definition}

A characteristic of knowing-how formulas is that they capture an agent's ability to carry out, not an individual basic action, but a whole sequence of basic actions that leads to success. Such a sequence is called a \emph{plan}.
Simply put, a plan is a finite sequence of basic actions, describing how the agent acts step by step. We write $\PLANS$ for the set of all plans. Moreover, we denote by $\varepsilon$ the empty plan, which represents the option of doing nothing at all. Plans give rise to new relations built on those in a model. These new relations are defined below.

\begin{definition}\label{def:plans-aux}
    Let 
    $\Rfamily$ be a  family of binary relations on $\S$.
    We define $\R_{\varepsilon} = \setof{(s,s)}{s \in \S}$.
    In turn, for every $\plan \in \PLANS$, and $a \in \ACT$, we define
    $\R_{\plan{a}}$ as the (relational) composition of $\R_{\plan}$ and $\R_{a}$.
    Finally, for all $\R_\plan$,
        and all $A\subseteq\S$, we define $\R_\plan(A)=\setof{t}{\text{there is } s \in A \mbox{ s.t. } (s,t)\in\R_\plan}$,
        and $\R_\plan(t)=\R_\plan(\set{t})$. \qed
\end{definition}

The distinguishing feature of $\KHlogic$ is the notion of \emph{strong executability}, which characterizes when a plan is \emph{adequate} for execution. This notion is inspired by conformant planning in~\cite{Smith&Weld98,CimattiPRT03}, and its motivation and significance are discussed in detail in~\cite{Wang15lori}. Intuitively, a plan is \emph{strongly executable} if, once its execution begins, it can always be carried through to completion: at every step of the plan, the next action is guaranteed to be performable, regardless of how the execution unfolds.

\begin{definition} \label{def:plans-exec}
    For any $\plan=a_1 \dots a_n \in \ACT^*$, and $1 \leq i \leq j \leq n$, we denote: 
    $\plan_i = a_i$;
    $\plan[i{:}j] = a_i \dots a_j$; and
    $|\plan| = n$.
    Let $\model = \tup{\S, \Rfamily, \V}$ be an $\KHlogic$ model.
    We say that a plan $\plan$ is \defstyle{strongly executable (SE)} at
    $s \in \S$, iff for all $0 \leq i < \card{\plan}$ and all $t \in \R_{(\plan[1{:}i])}(s)$, it follows that $\R_{\plan_{(i+1)}}(t) \neq \emptyset$. 
    The set of all states at which $\plan$ is strongly executable is defined as $\stexec(\plan) = \setof{s}{\plan \ \mbox{\rm is SE at } s}$.
    \qed
\end{definition}

Note $\stexec(\varepsilon) = \S$. 

To build intuition, we illustrate the strong executability of a plan and its execution in an $\KHlogic$ model with an example.

\begin{example}\label{ex:rel-notions}
    Consider the following $\KHlogic$ model:
    \begin{center}
        \begin{tikzpicture}[->]
            \node [state,label=above:\small$s$] (w1) {$p$};
            \node [state, label=above:\small$t$,right = of w1] (w2) {$r$};
            \node [state, label=above:\small$v$, below = of w2] (w3) {$r$};
            \node [state, label=above:\small$u$,right = of w2] (w4) {$q$};


            \path (w1) edge node [label-edge, above] {$a$} (w2)
                (w2) edge node [label-edge, above] {$b$} (w4)
                (w1) edge node [label-edge, above] {$a$} (w3);

        \end{tikzpicture}
    \end{center}
    Moreover, consider the plan
    $ab$.
    We have $\R_a(s)=\{t,v\}$ and $\R_{ab}(s)=\{u\}$.
    Thus, $a$ is strongly executable at $s$, i.e., $s\in\stexec(a)$.
    However, $ab$ is not, i.e., $s\notin\stexec(ab)$.
    This is the case since $v\in\R_{a}(s)$ but $\R_b(v)=\emptyset$.
    In summary, $\stexec(\varepsilon)=\S$, $\stexec(a)=\{s\}$, and
    $\stexec(ab)=\emptyset$.
\end{example}

With the key semantic ingredients in place, we are ready to  define how formulas of $\KHlogic$ are evaluated on $\KHlogic$ models. Our presentation follows~\cite{Wang15lori,Wang2016}.

\begin{definition} \label{def:khsemantics}
    Let $\model = \tup{\S,\Rfamily,\V}$ be an $\KHlogic$ model.
    We define $\truthset{\model}{\varphi}$ recursively as:
    \[
    \begin{array}{c}
        \truthset{\model}{p} {=}
            \V(p)   \quad  \quad 
        \truthset{\model}{\lnot \varphi} {=}
            \S \setminus \truthset{\model}{\varphi}  \\  
         \truthset{\model}{\varphi \lor \psi} {=}
            \truthset{\model}{\varphi} \cup \truthset{\model}{\varphi} \\
       \truthset{\model}{\kh(\varphi,\psi)} =
            \begin{cases}
                \S & \text{if there is } \plan{\in}\PLANS \text{ s.t. } 
                        \truthset{\model}{\varphi} \subseteq \stexec(\plan) \\ 
                      &   \text{ and } \R_{\plan}(\truthset{\model}{\varphi}) \subseteq \truthset{\model}{\psi} \\
                \emptyset & \text{otherwise}. \\
            \end{cases}
    \end{array}
    \]
   We say that a plan $\plan \in \PLANS$ is a \emph{witness} for $\kh(\varphi,\psi)$ iff 
       $\truthset{\model}{\varphi} \subseteq \stexec(\plan)$ and
       $\R_{\plan}(\truthset{\model}{\varphi}) \subseteq \truthset{\model}{\psi}$. \qed
\end{definition}

\begin{example}\label{exs:semantics}
Let $\model$ be the $\KHlogic$ model from~\Cref{ex:rel-notions}. It is easy to verify ${\truthset{\model}{\kh(p,r)}=\S}$ (using $a$ as a witness), while $\truthset{\model}{\kh(p,q)}=\emptyset$ (there is no witness for the formula).
\end{example}

\subsection{The logic $\Alogic$}

The logic $\Alogic$ is obtained from the grammar in~\Cref{def:khsyntax} by
replacing $\kh(\varphi,\psi)$ with $\A\varphi$.
The modality $\A\varphi$, called \emph{universal}, holds at all states of an
$\KHlogic$ model $\model = 
    \tup{\S,\Rfamily,\V}$ iff $\varphi$ holds at every state $s \in S$.
The dual of $\A\varphi$ is defined as $\E\varphi = \neg\A\neg\varphi$, and is called \emph{existential}.
Then, observe  that $\A\varphi$ is equivalent to the $\KHlogic$ formula
$\kh(\lnot\varphi,\bot)$. This makes $\Alogic$ a semantic fragment of $\KHlogic$. As we show next, $\Alogic$ is strictly contained in $\KHlogic$ over $\KHlogic$ models.

\begin{theorem}
    \label{th:expressivity}
    $\KHlogic$ is strictly more expressive than $\Alogic$ over $\KHlogic$ models. 
\end{theorem}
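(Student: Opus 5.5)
Since $\A\varphi$ is equivalent to $\kh(\lnot\varphi,\bot)$, every $\Alogic$ formula translates into an equivalent $\KHlogic$ formula. So $\KHlogic$ is at least as expressive as $\Alogic$, and it remains to exhibit a $\KHlogic$ formula with no $\Alogic$ equivalent. The natural candidate is $\kh(p,q)$, which depends on the action relations, whereas $\Alogic$ formulas do not mention $\Rfamily$ at all.

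The plan is to build two $\KHlogic$ models, $\model_1$ and $\model_2$, that share the same set of states $\S$ and the same valuation $\V$ but have different relations. Take $\S=\{s,u\}$, $\V(p)=\{s\}$, $\V(q)=\{u\}$, and all other propositions empty. In $\model_1$ let $\R_a=\{(s,u)\}$, and let every other relation be empty. In $\model_2$ let every relation be empty.

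We then check the following.
\begin{itemize}
\item In $\model_1$ the plan $a$ is a witness for $\kh(p,q)$. Indeed, $\truthset{\model_1}{p}=\{s\}\subseteq\stexec(a)$ and $\R_a(\{s\})=\{u\}\subseteq\truthset{\model_1}{q}$, so $\truthset{\model_1}{\kh(p,q)}=\S$.
\item In $\model_2$ the only strongly executable plan at $s$ is $\varepsilon$, since every nonempty plan needs a first transition and there is none. The plan $\varepsilon$ gives $\R_\varepsilon(\{s\})=\{s\}\not\subseteq\{u\}$. Hence no witness exists and $\truthset{\model_2}{\kh(p,q)}=\emptyset$.
\end{itemize}

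The key lemma is that for every $\Alogic$ formula $\chi$ we have $\truthset{\model_1}{\chi}=\truthset{\model_2}{\chi}$. We prove this by induction on $\chi$.
\begin{itemize}
\item For atoms, the claim holds because the two models have the same valuation.
\item For Boolean connectives, it follows directly from the induction hypothesis.
\item For $\A\chi$, the semantics refers only to $\S$ and $\truthset{\cdot}{\chi}$, and both are equal in the two models by the induction hypothesis.
\end{itemize}

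Now suppose, for contradiction, that some $\chi\in\Alogic$ were equivalent to $\kh(p,q)$. Then $s\in\truthset{\model_1}{\chi}=\truthset{\model_2}{\chi}$, but $s\notin\truthset{\model_2}{\kh(p,q)}$, which is a contradiction.

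No step here is a real obstacle. The only point that needs care is the strong executability check in $\model_2$, namely that the empty plan is the only candidate witness and that it fails.
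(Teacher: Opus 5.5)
Your proposal is correct and follows the paper's proof: it uses the same formula $\kh(p,q)$ and the same pair of two-state models, which differ only in a single $a$-edge from the $p$-state to the $q$-state, and it rests on the same observation that \Alogic cannot see the action relations. Because you use literally the same states and valuation in both models, the paper's remark that the two models ``clearly'' satisfy the same \Alogic formulas becomes an immediate induction.
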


\begin{proof}
We show that the formula $\kh(p,q)$ is not expressible in 
$\Alogic$. 
Consider the two $\KHlogic$ models 
$\model$ and $\model'$ below:
\begin{center}
        \begin{tikzpicture}[->]
        \node [state,label=above:\small$s$] (w1) {$p$};
        \node [state, label=above:\small$t$,right = of w1] (w2) {$q$};
        \node[below left of=w1,xshift=-0.3cm,yshift=0.3cm] (m) {$\model$};
    
        \node [state, label=above:\small$s'$, right = 2cm of w2] (w3) {$p$};
        \node [state, label=above:\small$t'$,right = of w3] (w4) {$q$};
        \node[below left of=w3,xshift=-0.3cm,yshift=0.3cm] (m) {$\model'$};


        \path (w1) edge node [label-edge, above] {$a$} (w2);

    \end{tikzpicture}
\end{center}
Clearly, these models satisfy the same $\Alogic$ formulas. However, we have $\truthset{\model}{\kh(p,q)}\neq\emptyset$ while $\truthset{\model'}{\kh(p,q)}=\emptyset$.
\end{proof}

Since the semantics of $\Alogic$ does not refer to actions, we can simply write its models as pairs $\model=\tup{\S,\V}$, where $\S$ is a set of states and $\V : \PROP \to 2^{\S}$ is a valuation function.
Such pairs are called {\em $\Alogic$ models}.

\subsection{The satisfiability problem}

Let $\cal L$ be $\KHlogic$ or $\Alogic$. The
\emph{satisfiability problem for $\cal L$}, written ${\sf SAT}({\cal L})$, is the following decision problem:

\begin{mdframed}[linewidth=0.8pt,roundcorner=4pt]
${\sf SAT}({\cal L})$ 
\hrule\vspace{0ex}
\begin{description}\setlength\itemsep{0em}
  \item[Input:] A formula $\varphi$ of $\cal L$.
  \item[Question:] Is there an $\cal L$ model $\model$ with
  $\truthset{\model}{\varphi} \neq \emptyset$?
\end{description}
\end{mdframed}

The main focus of this paper is the computational complexity of the
satisfiability problem for $\KHlogic$.
To obtain a complete characterization, we rely on the following classical
result concerning satisfiability in $\Alogic$.


\begin{proposition}
{\em \cite[Th.~6.2]{Ladner77}} 
    ${\sf SAT}(\Alogic)$ is $\NP$-complete.
\end{proposition}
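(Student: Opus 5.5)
We prove the two directions separately. Hardness is immediate; membership in $\NP$ follows from a polynomial-size model property together with polynomial-time model checking.

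\emph{Lower bound.} Every propositional formula is an $\Alogic$ formula without occurrences of $\A$. Its truth at a state $s$ of an $\Alogic$ model depends only on which propositional symbols hold at $s$. Hence a propositional formula is satisfiable iff it is satisfiable, as an $\Alogic$ formula, in a one-state model. This identity map is a polynomial reduction from propositional SAT, so ${\sf SAT}(\Alogic)$ is $\NP$-hard.

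\emph{Small model property.} Let $\model=\tup{\S,\V}$ and $s\in\truthset{\model}{\varphi}$. Let $\mathit{sub}(\varphi)$ be the set of subformulas of $\varphi$; it has at most $\size{\varphi}$ elements. Define $\S'\subseteq\S$ as follows.
\begin{itemize}
\item Put $s$ into $\S'$.
\item For each $\A\psi\in\mathit{sub}(\varphi)$ with $\truthset{\model}{\A\psi}=\emptyset$, pick one state $t_\psi\in\S\setminus\truthset{\model}{\psi}$ and put it into $\S'$.
\end{itemize}
Then $\card{\S'}\le\size{\varphi}+1$. Let $\model'=\tup{\S',\V'}$ with $\V'(p)=\V(p)\cap\S'$.

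We claim that for every $\chi\in\mathit{sub}(\varphi)$ we have $\truthset{\model'}{\chi}=\truthset{\model}{\chi}\cap\S'$. The proof is by induction on $\chi$, carried out over all subformulas simultaneously so that nested occurrences of $\A$ are covered. The cases $p$, $\lnot$ and $\lor$ are routine. For $\A\psi$ there are two cases.
\begin{itemize}
\item If $\A\psi$ holds in $\model$, then $\psi$ holds everywhere in $\S$. By the induction hypothesis $\psi$ holds everywhere in $\S'$, so $\A\psi$ holds in $\model'$.
\item If $\A\psi$ fails in $\model$, then the witness $t_\psi$ lies in $\S'$ and, by the induction hypothesis, falsifies $\psi$ in $\model'$. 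So $\A\psi$ fails in $\model'$.
\end{itemize}
Since the truth value of $\A\psi$ is uniform across states in both models, this gives the equality of truth sets. In particular $s\in\truthset{\model'}{\varphi}$.

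\emph{Upper bound.} The $\NP$ algorithm is as follows.
\begin{enumerate}
\item Guess a set of at most $\size{\varphi}+1$ states.
\item Guess a valuation of the propositional symbols occurring in $\varphi$ on these states.
\item Guess a designated state.
\item Compute $\truthset{\model'}{\chi}$ for all $\chi\in\mathit{sub}(\varphi)$ bottom-up, and accept iff the designated state lies in $\truthset{\model'}{\varphi}$.
\end{enumerate}
In step 4, each $\A\psi$ is evaluated by scanning the polynomially many states. The guess has polynomial size and the check runs in polynomial time. Correctness follows from the small model property.

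The only step needing care is the $\A$ case of the induction. It relies on keeping one falsifying witness for every false universal subformula. It also relies on the fact that restricting the domain can only make universal statements easier to satisfy, never harder.
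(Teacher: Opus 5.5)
Your proof is correct and matches the standard argument behind the result the paper cites from Ladner; the paper itself gives no proof, only the citation. Hardness comes from reading propositional formulas as $\A$-free formulas. Membership comes from the small model property obtained by keeping the designated state plus one falsifying witness for each false $\A$-subformula, which is the same $n+1$ bound the paper later invokes as Ladner's Lemma~6.1 in the proof of Theorem~\ref{th:satkh-np}, followed by guess-and-check.
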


\cut{
As mentioned, the logic $\Alogic$ is a fragment of $\KHlogic$, since the modality $\A$ is definable with $\KHlogic$ formulas. It is not difficult to see that $\Alogic$ it is indeed a strict fragment of $\KHlogic$. As it is usual, this means that two models that satisfy the same formulas of $\Alogic$, can be distinguished by an $\KHlogic$ formula. 
}

\subsection{Subjective formulas}

Finally, we identify and introduce a fragment of $\KHlogic$ that is sufficient for establishing our results. A \defstyle{positive atom} (resp. \defstyle{negative atom}) is an $\KHlogic$ formula of the form
$\kh(\aformulabis, \aformulater)$ (resp. $\neg \kh(\aformulabis, \aformulater)$) such that $\aformulabis$ and $\aformulater$ are $\kh$-free (the same criteria also applies to formulas $\A\psi$ and $\E\psi$, since they are defined as abbreviations). 
A \defstyle{subjective formula} is a Boolean combination of atoms, see e.g.~\cite{Bienvenu&Fargier&Marquis10,Niveau&Zanuttini16}.
The following lemma can easily be 
shown witnessing the globality of positive atoms.

\begin{lemma}\label{lemma-subjective-fragment}
Let $\aformula$ be a subjective formula and $\model$ be an $\KHlogic$ model.
Then either $\truthset{\model}{\aformula} = \S$ or
$\truthset{\model}{\aformula} = \emptyset$.
\end{lemma}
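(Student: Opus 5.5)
The plan is a structural induction on the subjective formula $\aformula$, viewed as a Boolean combination of atoms (built with $\lnot$ and $\lor$ only, the other connectives being abbreviations). We prove the stronger statement that for every subjective $\aformula$, $\truthset{\model}{\aformula} \in \set{\S,\emptyset}$.

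\textbf{Base case.} Here $\aformula$ is an atom. If it is a positive atom $\kh(\aformulabis,\aformulater)$, the claim holds directly by \Cref{def:khsemantics}. The clause for $\kh$ sets $\truthset{\model}{\kh(\aformulabis,\aformulater)}$ to $\S$ if some plan $\plan \in \PLANS$ is a witness, and to $\emptyset$ otherwise. The witness condition refers only to the sets $\truthset{\model}{\aformulabis}$, $\truthset{\model}{\aformulater}$, $\stexec(\plan)$ and $\R_\plan$, and never to an evaluation state. If $\aformula$ is a negative atom $\lnot\kh(\aformulabis,\aformulater)$, it reduces to the negation case below. Atoms written with the abbreviations $\A\aformulabis$ and $\E\aformulabis$ unfold to $\kh(\lnot\aformulabis,\bot)$ and its negation, so they are already covered.

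\textbf{Inductive step.} There are two cases.
\begin{itemize}
\item For $\lnot\aformula'$ we have $\truthset{\model}{\lnot\aformula'} = \S\setminus\truthset{\model}{\aformula'}$. Complementation swaps $\S$ and $\emptyset$, so the result is again $\S$ or $\emptyset$.
\item For $\aformula_1 \lor \aformula_2$ the truth set is the union of the two truth sets. (The displayed clause in \Cref{def:khsemantics} has a typo repeating $\aformula$; the intended reading is $\truthset{\model}{\aformula_1}\cup\truthset{\model}{\aformula_2}$.) A union of two sets, each equal to $\S$ or $\emptyset$, is $\S$ if either is $\S$ and $\emptyset$ otherwise.
\end{itemize}
Together with the base case, this closes the induction.

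There is no real obstacle. The only points needing care are that the induction ranges over the Boolean structure above the atoms, not over the $\kh$-free arguments $\aformulabis,\aformulater$, and that the derived connectives are treated as abbreviations so that only the $\lnot$ and $\lor$ cases arise. In fact the $\kh$-freeness of the arguments plays no role: the same argument shows that any Boolean combination of arbitrary $\kh$-formulas is global.
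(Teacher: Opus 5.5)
Your proof is correct and is essentially the argument the paper has in mind. The paper gives no written proof beyond calling the lemma easy and attributing it to the globality of positive atoms: the $\kh$ clause of \Cref{def:khsemantics} only ever returns $\S$ or $\emptyset$, which is your base case, and closure of $\set{\S,\emptyset}$ under complement and union then handles the Boolean structure exactly as in your induction; your side remarks (the typo in the $\lor$ clause, and that $\kh$-freeness of the arguments is not needed) are also accurate.
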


\cut{
A formula $\aformula$ has \defstyle{modal depth at most one} iff its $\kh$-subformulae are positive atoms (no imbrication
of $\kh$-subformulae). 

\begin{lemma}\label{lemma-modal-depth-one}
For every $\KHlogic$ formula $\aformula$, there is an $\KHlogic$ formula $\aformula'$
with modal depth at most one, such that $\aformula$ and $\aformula'$ are logically equivalent.
\end{lemma}
}

\cut{The proof hinges on the following property.
Let $\kh(\aformulabis,\aformulater)$ be a positive atom occurring in $\aformula$
within the scope of a larger $\kh$-subformula.
Then $\aformula$ is logically equivalent to
\begin{multline*}
\big(\aformula[\kh(\aformulabis,\aformulater)\leftarrow\top]\wedge
\kh(\aformulabis,\aformulater)\big)
\;\wedge\\
\big(\aformula[\kh(\aformulabis,\aformulater)\leftarrow\perp]\wedge
\neg\kh(\aformulabis,\aformulater)\big),
\end{multline*}
where $\aformula[\kh(\aformulabis,\aformulater)\leftarrow\top]$
(resp.\ $\perp$) denotes the formula obtained by replacing all occurrences of
$\kh(\aformulabis,\aformulater)$ with $\top$ (resp.\ $\perp$).
Since this transformation strictly reduces the number of positive atoms under
larger $\kh$-subformulas, Lemma~\ref{lemma-modal-depth-one} follows.}

%% file: alternative.tex
In this section, we pinpoint the exact complexity of the satisfiability problem
for $\KHlogic$ by proving that it is in~$\NP$, thereby improving the
$\NP^\NP$ upper bound from~\cite{ACCFS23} and closing the complexity gap.
Our analysis takes a different route from~\cite{ACCFS23}: instead of relying on
unsatisfiability checks, we reduce $\KHlogic$ satisfiability directly to
$\Alogic$ satisfiability.
To streamline exposition, we begin by assuming that the formula~$\varphi$ is a
conjunction of atoms.
We return to this assumption and explain how to remove it in the general
algorithm (see the proof of \Cref{th:satkh-np}).

The overall strategy is to decompose the satisfiability check for~$\varphi$ into three subproblems, each addressing a different structural component of the formula.
\begin{enumerate}[(a)]
    \item We check the satisfiability of a conjunction of positive 
    atoms 
    of the form $\kh(\psi_1,\chi_1) \land \dots \land \kh(\psi_n,\chi_n)$.
    \item We check the satisfiability of conjunctions of 
    negative atoms
    of the form $\lnot\kh(\psi'_1,\chi'_1) \land \dots \land \lnot\kh(\psi'_m,\chi'_m)$.
    \item We combine the results from (a) and (b) to decide the satisfiability of conjunctions of atoms, which shall be sufficient to handle the general case.
\end{enumerate}
The next subsections address each subproblem in turn.

\subsection{Satisfiability for conjunctions of positive atoms}
\label{section-satplus}
\input{satplus}

\subsection{Satisfiability for conjunctions of negative atoms}
\label{section-satminus}
\input{satminus}

\subsection{Satisfiability for conjunctions of  atoms}
\label{section-satflat}
\input{satflat}

%% file: satplus.tex
We first characterize the satisfiability status of a conjunction of positive
atoms of the form $\varphi^{+}=\kh(\psi_1,\chi_1) \wedge \cdots \wedge \kh(\psi_n,\chi_n)$.
In~\cite[Prop.~5]{ACCFS23}, this subproblem relies on \emph{unsatisfiability} checks of propositional formulas. Here, instead, we simply reduce the satisfiability of $\varphi^{+}$ to the satisfiability of an $\Alogic$ formula. The latter states that for each conjunct $\kh(\psi_i,\chi_i)$, either its precondition $\psi_i$ is false everywhere in the $\KHlogic$ model (i.e., $\A\neg\psi_i$ holds), or its postcondition~$\chi_i$ holds somewhere~(i.e., $\E\chi_i$ holds).

\begin{lemma}\label{lemma:satplus}
\cut{
    Let~$I$ be a finite set of indices.
    Define the formulas~$\varphi^{+}$ and~$\theta^{+}$ as follows:
    \[
      {\varphi^{+}} =  {
        \bigwedge_{i \in I} \kh(\psi_i,\chi_i)
        }
    \quad
    \mbox{and}
    \quad
        {\theta^{+}} = {
            \bigwedge_{i \in I}
            (\A(\lnot \psi_i) \lor \E\chi_i)
        }. 
    \]
    Let $\depth(\varphi^{+}) = 1$, 
    then, $\varphi^{+}$ is satisfiable iff $\theta^{+}$ is satisfiable.
}
Let $\varphi^{+}$ 
and $\theta^{+}$ 
be subjective formulas: 
\[
      {\varphi^{+}} =  {
        \bigwedge_{i \in I} \kh(\psi_i,\chi_i)
        }
    \quad
    \mbox{and}
    \quad
        {\theta^{+}} = {
            \bigwedge_{i \in I}
            (\A(\lnot \psi_i) \lor \E\chi_i)
        }. 
    \]
Then, $\varphi^{+}$ is satisfiable iff $\theta^{+}$ is satisfiable.
\end{lemma}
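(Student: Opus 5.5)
We prove the two directions separately; the backward direction requires constructing an LTS from an $\Alogic$ model.

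\emph{($\Rightarrow$)} Let $\model=\tup{\S,\Rfamily,\V}$ satisfy $\varphi^{+}$. By \Cref{lemma-subjective-fragment}, every conjunct $\kh(\psi_i,\chi_i)$ then holds globally. We take the $\Alogic$ model $\tup{\S,\V}$ obtained by forgetting the relations. Since $\psi_i,\chi_i$ are $\kh$-free, their truth sets coincide in both models. Fix $i$ and a witness $\plan$.

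\begin{itemize}
\item If $\truthset{\model}{\psi_i}=\emptyset$, then $\A\lnot\psi_i$ holds.
\item Otherwise, pick $s\in\truthset{\model}{\psi_i}$. Since $s\in\stexec(\plan)$, an induction on the length of the prefixes of $\plan$ shows that $\R_{\plan[1{:}j]}(s)\neq\emptyset$ for all $j$. Hence $\R_\plan(s)\neq\emptyset$, and any $t\in\R_\plan(s)$ satisfies $\chi_i$. So $\E\chi_i$ holds.
\end{itemize}

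Thus $\theta^{+}$ holds at every state.

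\emph{($\Leftarrow$)} Let $\tup{\S,\V}$ be an $\Alogic$ model of $\theta^{+}$. As $\theta^{+}$ is subjective, the same globality argument applies, so each disjunct choice is fixed over the whole model. Let $J\subseteq I$ be the set of indices with $\truthset{}{\psi_i}\neq\emptyset$; for these, $\E\chi_i$ must hold. For each $j\in J$ choose $t_j\in\truthset{}{\chi_j}$. Using fresh distinct actions $a_j\in\ACT$ for $j\in J$ (possible since $\ACT$ is countable; if $\ACT$ is regarded as fixed and small, a single action suffices only when one target is needed), define
\[
\R_{a_j}=\setof{(s,t_j)}{s\in\S},
\]
and let all other relations be empty.

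We then check each conjunct of $\varphi^{+}$ in $\model=\tup{\S,\Rfamily,\V}$:

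\begin{itemize}
\item For $i\notin J$, the empty plan $\varepsilon$ is a witness, since $\truthset{}{\psi_i}=\emptyset$ makes both conditions vacuous.
\item For $j\in J$, the plan $a_j$ works. It is strongly executable everywhere because $\R_{a_j}$ is total, and $\R_{a_j}(\truthset{}{\psi_j})=\{t_j\}\subseteq\truthset{}{\chi_j}$.
\end{itemize}

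Since the valuation is unchanged and $\psi_i,\chi_i$ are $\kh$-free, their truth sets are the same as in the $\Alogic$ model. Hence $\varphi^{+}$ holds everywhere in $\model$, and in particular $\model$ is nonempty.

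The only delicate point is the availability of several actions. If $\ACT$ might be a singleton, we instead use a single action $a$. Each state is sent to one target, but this fails when different $\psi_j$ share states but require different targets. In that case we would duplicate states: we take copies $\S\times J$ with the same valuation, send every state via $a$ to the copy of $t_j$ tagged by $j$, and use plans of length two to route through the tags. I expect that this case, rather than the first direction, would need the most care. The paper's setting of a countable $\ACT$ suggests the fresh-action construction suffices.
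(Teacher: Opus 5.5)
Your proof is correct and essentially the paper's. The forward direction is the same case analysis with the split placed on $\truthset{\model}{\psi_i}$ instead of $\truthset{\model}{\chi_i}$. The backward direction likewise uses $\varepsilon$ for unsatisfiable preconditions and one dedicated action per remaining index. The only difference is that you make $\R_{a_\ell}$ a total map onto one chosen $\chi_\ell$-state instead of the paper's $\truthset{\nmodel}{\psi_\ell}\times\truthset{\nmodel}{\chi_\ell}$. That is fine here, but the paper's choice is the one that still works in \Cref{lemma:combination}: there, actions executable outside $\truthset{\nmodel}{\psi_\ell}$, or landing on a single arbitrary point, could create witnesses for the negative atoms.

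Drop the closing speculation about a singleton $\ACT$: no duplication trick can work there, because the lemma itself fails in that case. For example, $\kh(\top,q)\wedge\kh(\top,\lnot q)$ is unsatisfiable with a single action $a$. For $n\le m$ one has $\R_{a^m}(\S)\subseteq\R_{a^n}(\S)$, and strongly executable plans have nonempty images of the nonempty set $\S$. Yet the corresponding $\theta^{+}$, which is equivalent to $\E q\wedge\E\lnot q$, is satisfiable. So the lemma genuinely relies on $\ACT$ being large enough, which the paper, like you, takes for granted.
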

\begin{proof}
    \setcounter{ieqnnum}{0}%
    $(\Rightarrow)$ Suppose that $\varphi^{+}$ is satisfiable.
    There is an $\KHlogic$ model $\model = \tup{\S,\Rfamily,\V}$ 
    with
        \ieqn{\truthset{\model}{\varphi^{+}} = \S}.
    Based on $\model$, we build an $\SFive$ model $\nmodel = \tup{\S, \V}$ and prove that for all $i \in I$, $\truthset{\nmodel}{\A(\lnot \psi_i) \lor \E\chi_i} = \S$. 

    From \ieqnref0, we know that for all $i \in I$, it holds that 
        \ieqn{\truthset{\model}{\kh(\psi_i,\chi_i)} = \S}.
    From \ieqnref0, we know there is a witness plan $\pi \in \PLANS$ such that
        \ieqn{\truthset{\model}{\psi_i} \subseteq \stexec(\pi)} and
        \ieqn{\R_{\pi}(\truthset{\model}{\psi_i}) \subseteq \truthset{\model}{\chi_i}}.
    We proceed with a case analysis. 

    \begin{description}
        \item[Case $\truthset{\model}{\chi_i} \neq \emptyset$:] Then,
        we have $\truthset{\nmodel}{\chi_i} \neq \emptyset$.
        This ensures $\truthset{\nmodel}{\E\chi_i} = \S$, which yields $\truthset{\nmodel}{\A(\lnot \psi_i) \lor \E\chi_i} = \S$. 

        \item[Case $\truthset{\model}{\chi_i} = \emptyset$:] 
        From \ieqnref0, we have $\R_{\pi}(\truthset{\model}{\psi_i}) \subseteq \emptyset$.
          The claim is: 
              \ieqn{\truthset{\nmodel}{\psi_i} = \emptyset}.
         To prove the claim, suppose by contradiction that there is $s \in \S \cap \truthset{\nmodel}{\psi_i}$.
        Since $\pi$ is strongly executable at $s$, we have $\R_{\pi}(\truthset{\model}{\psi_i}) \neq \emptyset$.
         But this contradicts the fact that $\pi$ was chosen as a witness for $\kh(\psi_i,\chi_i)$.
        Now, \ieqnref0 ensures $\truthset{\nmodel}{\A(\lnot\psi_i)} = \S$, which yields $\truthset{\nmodel}{\A(\lnot \psi_i) \lor \E\chi_i} = \S$. 
    \end{description}

    \noindent Hence, the satisfiability of $\varphi^{+}$ implies the satisfiability of $\theta^+$.

    \noindent
    ($\Leftarrow$) Suppose that $\theta^{+}$ is satisfiable.
    Then, there exists an $\SFive$ model $\nmodel = \tup{\S, \V}$ such that for all $i \in I$,
        \ieqn{\truthset{\nmodel}{\A(\lnot \psi_i) \lor \E\chi_i} = \S}.
    Given~$\nmodel$, we construct an $\KHlogic$ model~$\model = \tup{\S, \Rfamily, \V}$ with the goal of ensuring~${\truthset{\model}{\varphi^{+}} = \S}$. The main challenge in this construction lies in defining the transition relations~$\Rfamily$. To address this, we define a set
    \[
        I^{\star} = \setof{\ell \in I}{ \truthset{\nmodel}{\chi_\ell} \neq \emptyset}.
    \]
    Then, 
    define for all $a_{\ell} \in \ACT$: 
    \begin{align*}
        \R_{a_{\ell}} &=
            \begin{cases}
                {\truthset{\nmodel}{\psi_{\ell}} \times \truthset{\nmodel}{\chi_{\ell}}}
                    & \text{if } \ell \in I^{\star} \\
                \emptyset
                    & \text{if } \ell \notin I^{\star}.
            \end{cases}
    \end{align*}

    The claim is that for all $i \in I$, $\truthset{\model}{\kh(\psi_i,\chi_i)} = \S$.
    Let $i \in I$. We proceed with a case analysis.

    \begin{description}
        \item[Case $\truthset{\model}{\psi_i} = \emptyset$:] 
        By the semantics, we directly get that  $\truthset{\model}{\kh(\psi_i,\chi_i)} = \S$ (witness: empty plan).

        \item[Case $\truthset{\model}{\psi_i} \neq \emptyset$:] 
         Here we have $\truthset{\nmodel}{\A(\lnot\psi_i)} = \emptyset$, and so, from \ieqnref0, we have $\truthset{\nmodel}{\chi_i} \neq \emptyset$. This implies $i \in I^{\star}$. From this last fact, we can choose $a_i \in \ACT \subseteq \PLANS$ as a witness plan. By definition, we know that for this plan $\truthset{\model}{\psi_i} \subseteq \stexec(a_i)$ and $\R_{a_i}(\truthset{\model}{\psi_i}) \subseteq \truthset{\model}{\chi_i}$. This ensures that $\truthset{\model}{\kh(\psi_i,\chi_i)} = \S$. 
    \end{description}
     These two cases  establish that the satisfiability of $\varphi^{+}$ is implied by the satisfiability of the $\SFive$ formula $\theta^{+}$.
\end{proof}

%% file: satminus.tex
Now we turn our attention to the subproblem of checking the satisfiability status  
of a conjunction of negative atoms of the form  $\varphi^{-}=\neg\kh(\psi_1,\chi_1)\wedge\cdots\ \wedge \neg \kh(\psi_n,\chi_n)$.
This subproblem is solved similarly to the same case in~\cite[Prop.~7]{ACCFS23}, where for each atom $\neg\kh(\psi_i,\chi_i)$, a propositional satisfiability check of $\psi_i\wedge\neg\chi_i$ is performed. Here, we slightly simplify this, again by reducing the satisfiability of each negative atom to the satisfiability of the existential formula $\E(\psi_i\wedge\neg\chi_i)$. 

\begin{lemma}\label{lemma:satminus}
\cut{
    Let~$I$ be a finite set of indices.
    Define the formulas~$\varphi^{-}$ and~$\theta^{-}$ as follows:
    \[
        {\varphi^{-}} = {
        \bigwedge_{i \in I} \lnot\kh(\psi_i,\chi_i)
        } 
    \quad
    \mbox{and}
    \quad
       {\theta^{-}} =  {
            \bigwedge_{i \in I}
            \E(\psi_i \land \lnot\chi_i)
        }. 
    \]
    Let $\depth(\varphi^{-}) = 1$, 
    then, $\varphi^{-}$ is satisfiable iff $\theta^{-}$ is satisfiable.
    }
Let $\varphi^{-}$ and $\theta^{-}$ 
be subjective formulas:
\[ {\varphi^{-}} = {
        \bigwedge_{i \in I} \lnot\kh(\psi_i,\chi_i)
        } 
    \quad
    \mbox{and}
    \quad
       {\theta^{-}} =  {
            \bigwedge_{i \in I}
            \E(\psi_i \land \lnot\chi_i)
        }. 
    \]
    Then, $\varphi^{-}$ is satisfiable iff $\theta^{-}$ is satisfiable.
\end{lemma}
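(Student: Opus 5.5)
The plan is to prove the two directions separately, in the style of the proof of \Cref{lemma:satplus}.

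For $(\Rightarrow)$, suppose $\truthset{\model}{\varphi^{-}} = \S$ for some $\KHlogic$ model $\model = \tup{\S,\Rfamily,\V}$, and take $\nmodel = \tup{\S,\V}$. Fix $i \in I$. Since $\kh(\psi_i,\chi_i)$ fails, in particular the empty plan $\varepsilon$ is not a witness. We have $\stexec(\varepsilon) = \S$ and $\R_\varepsilon$ is the identity. Hence $\truthset{\model}{\psi_i} \not\subseteq \truthset{\model}{\chi_i}$, so some state satisfies $\psi_i \wedge \lnot\chi_i$. Because $\psi_i,\chi_i$ are $\kh$-free and the valuation is shared, this state witnesses $\truthset{\nmodel}{\E(\psi_i\wedge\lnot\chi_i)} = \S$. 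Doing this for every $i$ gives a model of $\theta^{-}$.

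For $(\Leftarrow)$, take an $\Alogic$ model $\nmodel = \tup{\S,\V}$ of $\theta^{-}$. Build the $\KHlogic$ model $\model = \tup{\S,\Rfamily,\V}$ with \emph{all} relations empty: $\R_a = \emptyset$ for every $a \in \ACT$. Then for any nonempty plan $\pi$, no state strongly executes $\pi$, since the first action already fails; thus $\stexec(\pi) = \emptyset$ for $\pi \neq \varepsilon$. Now fix $i$. By $\theta^{-}$ there is $s$ with $s \in \truthset{\model}{\psi_i}$ and $s \notin \truthset{\model}{\chi_i}$, again because $\kh$-free formulas evaluate identically in $\nmodel$ and $\model$. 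Any witness $\pi$ would need $s \in \stexec(\pi)$, forcing $\pi = \varepsilon$. But then $\R_\varepsilon(\truthset{\model}{\psi_i}) \ni s \notin \truthset{\model}{\chi_i}$, so $\varepsilon$ is not a witness either. Hence $\truthset{\model}{\kh(\psi_i,\chi_i)} = \emptyset$ for all $i$, and by \Cref{lemma-subjective-fragment} we get $\truthset{\model}{\varphi^{-}} = \S$.

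The main subtlety is the choice of transitions in $(\Leftarrow)$. Empty relations kill every nonempty plan uniformly for all negative atoms simultaneously, which avoids any interaction between the atoms. The only remaining candidate witness is $\varepsilon$, and it is refuted directly by the state satisfying $\psi_i\wedge\lnot\chi_i$. The other ingredient, used tacitly in both directions, is that $\kh$-free formulas have the same truth sets in $\model$ and $\nmodel$; this is a routine induction.
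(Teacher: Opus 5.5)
Your proof is correct and follows the paper's argument essentially step for step. In $(\Rightarrow)$ you rule out the empty plan $\varepsilon$ as a witness to obtain a state satisfying $\psi_i \land \lnot\chi_i$. In $(\Leftarrow)$ you take all $\R_a$ empty, so that $\stexec(\pi)=\emptyset$ for every $\pi \neq \varepsilon$, and that same state then refutes $\varepsilon$; you are only slightly more explicit than the paper about $\stexec(\varepsilon)=\S$ and about $\kh$-free formulas evaluating identically in $\model$ and $\nmodel$.
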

\begin{proof}
    \setcounter{ieqnnum}{0}%
    $(\Rightarrow)$ Suppose that $\varphi^{-}$ is satisfiable.
    Then, there is an $\KHlogic$  model $\model$ such that for all $i \in I$, $\truthset{\model}{\lnot\kh(\psi_i,\chi_i)} = \S$. Thus, 
    \ieqn{\truthset{\model}{\kh(\psi_i,\chi_i)} = \emptyset}.
    We show that the $\SFive$ model $\nmodel = \tup{\S, \V}$ satisfies that, 
    for all $i \in I$, $\truthset{\nmodel}{\E(\psi_i \land \lnot\chi_i)} = \S$, or equivalently, that $\truthset{\nmodel}{\psi_i \land \lnot \chi_i} \neq \emptyset$.

    Choose an arbitrary $i \in I$. From \ieqnref0, we have that for all $\pi \in \PLANS$, either
    $\truthset{\model}{\psi_i} \nsubseteq \stexec(\pi)$ or
    $\R_{\pi}(\truthset{\model}{\psi_i}) \nsubseteq \truthset{\model}{\chi_i}$.
    In particular, we have that $\varepsilon \in \PLANS$ and  
    \ieqn{\R_{\varepsilon}(\truthset{\model}{\psi_i}) \nsubseteq \truthset{\model}{\chi_i}}.
    By definition, $\R_{\varepsilon}(\truthset{\model}{\psi_i}) = \truthset{\model}{\psi_i}$.
    Together with \ieqnref0, this yields $\truthset{\model}{\psi_i} \nsubseteq \truthset{\model}{\chi_i}$.

    Then, there is $s \in \S$ such that $s \in \truthset{\model}{\psi_i}$ and $s \notin \truthset{\model}{\chi_i}$. Equivalently, $s \in \truthset{\model}{\psi_i \land \lnot \chi_i}$.
    This last step ensures $\truthset{\nmodel}{\psi_i \land \lnot \chi_i} \neq \emptyset$, and therefore, $\truthset{\nmodel}{\E(\psi_i \land \lnot \chi_i)} = \S$.

    Thus, the satisfiability of $\varphi^{-}$ implies the satisfiability of 
    the $\SFive$ formula $\theta^{-}$.

    \noindent
    $(\Leftarrow)$ Suppose that $\theta^{-}$ is satisfiable. Then, there is an $\SFive$ model $\nmodel = \tup{\S, \V}$ such that for all $i \in I$, $\truthset{\nmodel}{\E(\psi_i \land \lnot\chi_i)} = \S$, or equivalently,
    \ieqn{\truthset{\nmodel}{\psi_i \land \lnot\chi_i} \neq \emptyset}.

    From~$\nmodel$ we construct a model~$\model = \tup{\S, \Rfamily, \V}$ in which for all $a \in \ACT$,
    we set $\R_{a} = \emptyset$. The claim is that for all $i \in I$, $\truthset{\model}{\kh(\psi_i,\chi_i)} = \emptyset$, and so $\truthset{\model}{\lnot\kh(\psi_i,\chi_i)} = \S$.
    To prove this claim, we choose an arbitrary $i \in I$ and an arbitrary $\pi \in \PLANS$. We proceed with a case analysis.

    \begin{description}
            \item[Case $\pi \neq \varepsilon$:] 
        By definition of $\model$, we have $\stexec(\pi) = \emptyset$.
        Moreover, from \ieqnref0, $\truthset{\model}{\psi_i} \neq \emptyset$.
        This implies $\truthset{\model}{\psi_i} \nsubseteq \stexec(\pi)$.
        So, $\pi$ cannot be a witness plan for $\kh(\psi_i,\chi_i)$.

        \item[Case $\pi = \varepsilon$:] 
        From \ieqnref0, we have $\truthset{\model}{\psi_i} \nsubseteq \truthset{\model}{\chi_i}$.
        This implies $\R_{\varepsilon}(\truthset{\model}{\psi_i}) \nsubseteq \truthset{\model}{\chi_i}$.
        So, $\varepsilon$ cannot be used as a witness plan for $\kh(\psi_i,\chi_i)$ either.
    \end{description}
    These two cases establish that for all plan $\pi \in \PLANS$, $\truthset{\model}{\psi_i} \nsubseteq \stexec(\pi)$ or ${\R_{\pi}(\truthset{\model}{\psi_i}) \nsubseteq \truthset{\model}{\chi_i}}$, or equivalently $\truthset{\model}{\kh(\psi_i,\chi_i)} = \emptyset$. Hence, we obtain $\truthset{\model}{\lnot\kh(\psi_i,\chi_i)} = \S$. 
    
    Thus, satisfiability of $\theta^{-}$ implies satisfiability of $\varphi^{-}$.
\end{proof}

%% file: satflat.tex
In this section,  we  consider the conjunctions of positive {\em and} negative atoms by making use of the results in \Cref{section-satplus} and \Cref{section-satminus}. 
The main challenge is to handle the interactions between positive atoms and
negative atoms. Compared to~\cite[Prop.~11]{ACCFS23}, we rely purely on the satisfiability of an $\Alogic$ formula, and completely get rid of unsatisfiability checks. This is crucial in obtaining the $\NP$-membership. 

\Cref{lemma:combination} below formalizes the conditions under which positive and negative atoms can be jointly satisfied. Intuitively, it combines the results of \Cref{lemma:satplus,lemma:satminus} and extends them with an additional check ensuring that the interaction between positive and negative atoms does not lead to a contradiction. A contradiction may arise, for instance, when formulas such as $\kh(p,q)$, $\kh(q,r)$, and $\neg\kh(p,r)$ are all present simultaneously.
As shown below, this additional check is carried out by means of a partition $\{D,\bar{D}\}$ of the set of pairs of indices of positive atoms, together with the reflexive and transitive closure of~$\bar{D}$, denoted by $\C{\bar{D}}$.

\begin{lemma}\label{lemma:combination}
    Let $I,J$ be finite, disjoint sets of indices and
 let $\varphi, \theta$ be subjective formulas:
    \[
    \varphi = 
            \overbrace{\big(\bigwedge_{i \in I} \kh(\psi_i,\chi_i) \big)}^{\varphi^+}
            \land
            \overbrace{\big(\bigwedge_{j \in J} \lnot\kh(\psi_j,\chi_j)\big)}^{\varphi^{-}}, 
    \]
    \[
    \theta =
        \overbrace{
            \bigwedge_{i \in I} (\A(\lnot \psi_i) \lor \E\chi_i)
        }^{\theta^+}
        \wedge 
        \overbrace{
            \bigwedge_{j \in J} \E(\psi_j \land \lnot\chi_j)
        }^{\theta^{-}}
        \wedge
        \left(
        \bigvee_{D \subseteq (I \times I)} \theta_{D}
        \right)
    \]
    where: 
        \begin{multline*} 
        \theta_D \ = \ 
            \big(\bigwedge_{j \in J}
                \bigwedge_{(s,t) \in \C{\bar{D}}}
                    (\E(\psi_j \land  \lnot \psi_s) \vee
                    \E(\chi_t \land \lnot\chi_j))\big)
            \, \wedge \\ 
            \big(\bigwedge_{(t,s) \in D}
                \E(\chi_{t} {\land} \lnot\psi_{s})\big). 
        \end{multline*} 
    Then, $\varphi$ is satisfiable iff $\theta$ is satisfiable.
\end{lemma}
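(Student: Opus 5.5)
We prove both directions by reusing \Cref{lemma:satplus,lemma:satminus} for the conjuncts $\theta^+$ and $\theta^-$, so that only the disjunction $\bigvee_D \theta_D$ needs new work. Throughout, we read a pair $(t,s)\in I\times I$ as a potential ``chaining step'': a witness for $\kh(\psi_t,\chi_t)$ can be followed by a witness for $\kh(\psi_s,\chi_s)$ whenever $\truthset{}{\chi_t}\subseteq\truthset{}{\psi_s}$. The set $D$ collects the pairs for which this chaining is declared to fail. Accordingly, $\C{\bar{D}}$ records the pairs $(s,t)$ connected by a chain $s=i_1,\dots,i_k=t$ with $\truthset{}{\chi_{i_l}}\subseteq\truthset{}{\psi_{i_{l+1}}}$ for each consecutive pair.

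$(\Rightarrow)$ Let $\model$ satisfy $\varphi$, and set $\nmodel=\tup{\S,\V}$. By \Cref{lemma:satplus,lemma:satminus}, $\nmodel$ satisfies $\theta^+\wedge\theta^-$. We choose
\[
D=\setof{(t,s)\in I\times I}{\truthset{\model}{\chi_t}\nsubseteq\truthset{\model}{\psi_s}}.
\]
With this choice the second conjunct of $\theta_D$ holds by definition.

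For the first conjunct, suppose towards a contradiction that for some $j\in J$ and some $(s,t)\in\C{\bar{D}}$ we have $\truthset{}{\psi_j}\subseteq\truthset{}{\psi_s}$ and $\truthset{}{\chi_t}\subseteq\truthset{}{\chi_j}$. Take a chain $s=i_1,\dots,i_k=t$ witnessing $(s,t)\in\C{\bar D}$, and let $\pi_l$ be a witness for $\kh(\psi_{i_l},\chi_{i_l})$ in $\model$.

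The key auxiliary fact is a composition property: if $\pi$ witnesses $\kh(\psi,\chi)$, $\truthset{}{\chi}\subseteq\truthset{}{\psi'}$, and $\pi'$ witnesses $\kh(\psi',\chi')$, then $\pi\pi'$ witnesses $\kh(\psi,\chi')$. This follows from \Cref{def:plans-exec} by splitting the prefixes of $\pi\pi'$ into those inside $\pi$ and those extending $\pi$, using $\R_\pi(\truthset{}{\psi})\subseteq\truthset{}{\psi'}\subseteq\stexec(\pi')$. We also need the trivial monotonicity that a witness for $\kh(\psi,\chi)$ also witnesses $\kh(\psi'',\chi'')$ whenever $\psi''\subseteq\psi$ and $\chi\subseteq\chi''$. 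Applying both, $\pi_1\cdots\pi_k$ witnesses $\kh(\psi_j,\chi_j)$, which contradicts $\model$ satisfying $\neg\kh(\psi_j,\chi_j)$. The reflexive case $k=1$ is covered by the same argument.

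$(\Leftarrow)$ Let $\nmodel$ satisfy $\theta$, and fix a $D$ such that $\theta_D$ holds. We build $\model$ exactly as in \Cref{lemma:satplus}: $\R_{a_\ell}=\truthset{\nmodel}{\psi_\ell}\times\truthset{\nmodel}{\chi_\ell}$ for $\ell\in I^\star$, and all other relations are empty. The positive atoms then hold by that lemma.

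For $j\in J$, note first that $\truthset{}{\psi_j}\neq\emptyset$ and $\truthset{}{\psi_j}\nsubseteq\truthset{}{\chi_j}$ by $\theta^-$, so $\varepsilon$ is not a witness. Now consider a plan $\pi=a_{i_1}\cdots a_{i_k}$ with $k\ge1$ that is SE on all of $\truthset{}{\psi_j}$. Because each relation is a full product, an induction on $k$ shows the following:
\begin{itemize}
\item every $a_{i_l}$ is nonempty, so $i_l\in I^\star$;
\item $\truthset{}{\psi_j}\subseteq\truthset{}{\psi_{i_1}}$;
\item the image after $l$ steps is exactly $\truthset{}{\chi_{i_l}}$;
\item $\truthset{}{\chi_{i_l}}\subseteq\truthset{}{\psi_{i_{l+1}}}$ for each consecutive pair.
\end{itemize}
The inclusions $\truthset{}{\chi_{i_l}}\subseteq\truthset{}{\psi_{i_{l+1}}}$ mean that no consecutive pair $(i_l,i_{l+1})$ lies in $D$, since $D$-pairs satisfy $\E(\chi_t\wedge\lnot\psi_s)$. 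Hence $(i_1,i_k)\in\C{\bar D}$. If $\pi$ were a witness, we would also have $\truthset{}{\chi_{i_k}}\subseteq\truthset{}{\chi_j}$. Together with $\truthset{}{\psi_j}\subseteq\truthset{}{\psi_{i_1}}$, this contradicts the first conjunct of $\theta_D$ for the pair $(i_1,i_k)$. Plans that use actions outside $\{a_\ell\mid\ell\in I^\star\}$ are never SE on the nonempty set $\truthset{}{\psi_j}$. Therefore $\neg\kh(\psi_j,\chi_j)$ holds.

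The main obstacle is this exact characterisation of SE plans and their images in the constructed model (the induction above), together with the composition property in the $(\Rightarrow)$ direction. Both are routine but require care with the prefix quantification in \Cref{def:plans-exec}.
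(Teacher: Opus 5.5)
Your proposal is correct and follows the paper's proof essentially step for step. In $(\Rightarrow)$ you use the same choice $D=\{(t,s)\in I\times I\mid \truthset{\model}{\chi_t}\nsubseteq\truthset{\model}{\psi_s}\}$ together with composition of witness plans along a $\bar{D}$-chain; in $(\Leftarrow)$ you use the same product-relation model and an induction showing that a plan strongly executable on $\truthset{\model}{\psi_j}$ traces a $\bar{D}$-chain with image exactly $\truthset{\model}{\chi_{i_k}}$. Defining the relations via $I^\star$ rather than the paper's $K=\{k\in I\mid\truthset{\nmodel}{\psi_k}=\emptyset\}$ changes nothing, since both choices give $\R_{a_\ell}=\truthset{\nmodel}{\psi_\ell}\times\truthset{\nmodel}{\chi_\ell}$ for every $\ell\in I$.
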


Observe that $\theta$ is a subjective $\SFive$ formula. Moreover, there exists an equivalent $\SFive$ formula in disjunctive normal form (DNF) in which each disjunct contains a number of $\E$-subformulas cubic in $\card{I \cup J}$ (see also the proof of \Cref{th:satkh-np}).

\begin{proof}
    \setcounter{ieqnnum}{0}%
    $(\Rightarrow)$ Suppose that $\varphi$ is satisfiable:
    there is an $\KHlogic$ model $\model = \tup{\S,\Rfamily,\V}$ with
        \ieqn{\truthset{\model}{\varphi} = \S}.
    The model $\model$ readily gives us an $\SFive$ model $\nmodel = \tup{\S, \V}$.
    We need to show that there is a subset $D \subseteq (I \times I)$ such that $\truthset{\nmodel}{\theta^+ \wedge \theta^{-} \wedge \theta_D} = \S$. 
    By the $\Rightarrow$ direction of the proof of \Cref{lemma:satplus}, we have
    immediately $\truthset{\nmodel}{\theta^+} = \S$.
    Similarly, by the $\Rightarrow$ direction of the proof of \Cref{lemma:satminus}, we get $\truthset{\nmodel}{\theta^{-}} = \S$.
    
    Now, let $D = \setof{(t,s) \in I^2}{\truthset{\model}{\chi_t} \nsubseteq \truthset{\model}{\psi_s}}$.
    The definition of $D$ guarantees $\truthset{\model}{\chi_t \land \lnot\psi_s} \neq \emptyset$, for all $(t,s) \in D$. This establishes $\truthset{\nmodel}{\bigwedge_{(t,s) \in D} \E(\chi_t \land \lnot\psi_s)} = \S$ as required.


    It remains to prove \ieqn{\truthset{\nmodel}{\E(\psi_j \land  \lnot \psi_s) \lor \E(\chi_t \land \lnot\chi_j)} = \S} for all $j \in J$ and all $(s,t) \in \C{\bar{D}}$. 
    We claim that for all $(s,t) \in \C{\bar{D}}$, $\truthset{\model}{\kh(\psi_s,\chi_t)} = \S$. 
    If this claim holds, \ieqnref0 follows directly from the fact that either $\truthset{\model}{\psi_j} \nsubseteq \truthset{\model}{\psi_s}$ or $\truthset{\model}{\chi_t} \nsubseteq \truthset{\model}{\chi_j}$,
    since otherwise, 
    we can use the witness plan for $\truthset{\model}{\kh(\psi_s,\chi_t)} = \S$ to contradict $\truthset{\model}{\kh(\psi_j,\chi_j)} {=} \emptyset$.
    
    We complete the $\Rightarrow$ direction of the lemma by proving the claim.  
    To this end, suppose  $(s,t) \in \C{\bar{D}}$.
    Then, there are $i_0, \ldots, i_k$ in $I$ such that
    $i_0 = s$, $i_k = t$ and for all $0 \leq m < k$,
    $\pair{i_m}{i_{m+1}} \in \bar{D}$. 
    We proceed by induction on $k$.

       \begin{description}
            \item[Base case ($k=0$):]
            This means $s = t$. From  $s \in I$, we conclude immediately $\truthset{\model}{\kh(\psi_s,\chi_s)} = \S$.
    
            \item[Inductive step ($k>0$):]
            By the induction hypothesis, we have \ieqn{\truthset{\model}{\kh(\psi_s,\chi_{i_{k-1}})} = \S}.
            By definition of $\bar{D}$, 
            $\truthset{\model}{\chi_{i_{k-1}}} \subseteq \truthset{\model}{\psi_t}$. 
            Since $t \in I$, we also have \ieqn{\truthset{\model}{\kh(\psi_t,\chi_t)} = \S}.
            Then, let $\pi$ be a plan witnessing \ieqnref{-1}, and $\pi'$ be a plan witnessing \ieqnref0.
            By composition, the plan $\pi\pi'$ witnesses $\truthset{\model}{\kh(\psi_s,\chi_t)} = \S$.
        \end{description}
    Bringing together all the preceding steps, we conclude that the satisfiability of $\varphi$ guarantees the satisfiability of $\theta$.
    

    \noindent
    ($\Leftarrow$) Suppose that $\theta$ is satisfiable.
    Then, exists an $\SFive$ model $\nmodel = \tup{\S, \V}$ such that $\truthset{\nmodel}{\theta} = \S$ (equivalently $\truthset{\nmodel}{\theta} \neq \emptyset$, since $\theta$ is a subjective formula, see \Cref{lemma-subjective-fragment}).
    By definition of $\theta$, there is some $D \subseteq (I \times I)$ 
    s.t. 
    $\truthset{\nmodel}{\theta^+ \wedge \theta^{-} \wedge \theta_D} = \S$.
    
    We use the information encoded in $\theta^+ \wedge \theta^{-} \wedge \theta_D$
    to construct, from $\nmodel$, an $\KHlogic$ model $\model$ that satisfies~$\varphi$.
    The central difficulty in this construction lies in defining the interpretation of the actions.
    On the one hand, we must guarantee the existence of appropriate witness plans for the satisfaction of each atom $\kh(\psi_i, \chi_i)$ with $i \in I$.
    On the other hand, we must simultaneously prevent these plans from becoming witnesses for the satisfaction of any 
    atom $\kh(\psi_j, \chi_j)$ with $j \in J$.

    In working towards the definition of $\model$, let 
    \[
        K = \setof{k \in I}{\truthset{\nmodel}{\psi_k} = \emptyset}.
    \]
    We define $\model = \tup{\S, \Rfamily, \V}$, where
    for all $a_{\ell} \in \ACT$: 
    \begin{align*}
        \R_{a_{\ell}} &=
            \begin{cases}
                {\truthset{\nmodel}{\psi_{\ell}} \times \truthset{\nmodel}{\chi_{\ell}}}
                    & \text{if } \ell \in (I \setminus K) \\
                \emptyset
                    & \text{otherwise.}
            \end{cases}
    \end{align*}
    Note that, for any non-empty subset 
    $Y \subseteq \truthset{\nmodel}{\psi_{\ell}}$ with $\ell \in(I \setminus K)$, we have, by definition, that $\R_{a_\ell}(Y) = \truthset{\nmodel}{\chi_{\ell}} \neq \emptyset$.
    To see why $\truthset{\nmodel}{\chi_{\ell}} \neq \emptyset$,
    it suffices to recall that
    $\truthset{\nmodel}{\A(\lnot \psi_{\ell}) \lor \E\chi_{\ell}}
    = \S$ and $\truthset{\nmodel}{\A(\lnot \psi_{\ell})} = \emptyset$ because $\ell \not \in K$.

    We claim that $\truthset{\model}{\varphi} = \S$. To establish this claim, we must prove the following:  $\truthset{\model}{\kh(\psi_i,\chi_i)} = \S$, for all $i \in I$, and  $\truthset{\model}{\kh(\psi_j,\chi_j)} = \emptyset$, for all $j \in J$.

    Let us show first that, for all $i \in I$, $\truthset{\model}{\kh(\psi_i,\chi_i)} = \S$ by a case analysis, distinguishing between $i \in K$ or~$i \in (I{\setminus}K)$.

    \begin{description} 
        \item[Case $i \in K$:]
        The empty plan~$\varepsilon$ serves as a witness for $\truthset{\model}{\kh(\psi_i,\chi_i)} = \S$. Observe first that $\truthset{\model}{\psi_i} = \emptyset$, then $\truthset{\model}{\psi_i} \subseteq \stexec(\varepsilon) = \S$. Next, note that $\R_{\varepsilon}(\truthset{\model}{\psi_i}) = \truthset{\model}{\psi_i} = \emptyset$, which is trivially contained in $\truthset{\model}{\chi_i}$. These two conditions guarantee $\truthset{\model}{\kh(\psi_i,\chi_i)} = \S$. 

        \item[Case $i \in (I \setminus K)$:]
        In this case, the witness plan is simply the single action $a_i \in \ACT \subseteq \PLANS$. By construction, this action is defined so that two key conditions hold.
        First, the set of states satisfying $\psi_i$ is included in the domain of execution of $a_i$, i.e., $\truthset{\model}{\psi_i} \subseteq \stexec(a_i)$. Second, the states reached by executing $a_i$ from $\truthset{\model}{\psi_i}$ are all contained in $\truthset{\model}{\chi_i}$, i.e., $\R_{a_i}(\truthset{\model}{\psi_i}) \subseteq \truthset{\model}{\chi_i}$.
        Taken together, these conditions establish that $a_i$ is indeed a witness plan,  therefore $\truthset{\model}{\kh(\psi_i,\chi_i)} = \S$.
    \end{description}

    The two cases above tell us that we can construct a suitable witness plan for every $i \in I$. This ensures that 
    $\truthset{\model}{\varphi^+} = \S$. 

    Let us now turn to proving $\truthset{\model}{\kh(\psi_j,\chi_j)} = \emptyset$ for all $j \in J$. 
    To this end, choose an arbitrary $j \in J$.
    Note that, from $\truthset{\nmodel}{\theta^{-}} = \S$, $\truthset{\model}{\psi_j} \neq \emptyset$. 
    To conclude the proof, we must establish the following: for all $\pi \in \PLANS$, if $\truthset{\model}{\psi_j} \subseteq \stexec(\pi)$ then $\R_{\pi}(\truthset{\model}{\psi_j}) \nsubseteq \truthset{\model}{\chi_j}$. We proceed with a case analysis on whether $\pi$ is $\varepsilon$, or whether it is some other plan.

    \begin{description} 
        \item[Case ($\pi = \varepsilon$):]
        Observe that $\truthset{\nmodel}{\E(\psi_j \land \lnot\chi_j)} = \S$ guarantees that there is at least one state in $\truthset{\model}{\psi_j}$ that does not belong to $\truthset{\model}{\chi_j}$. Consequently, 
        $\R_{\varepsilon}(\truthset{\model}{\psi_j}) = \truthset{\model}{\psi_j} \nsubseteq \truthset{\model}{\chi_j}$. This implies that 
        $\varepsilon$ cannot be a witness for 
        the satisfaction of $\kh(\psi_j,\chi_j)$.

        \item[Case ($\pi \neq \varepsilon$):]
        Then $\pi \in \ACT^{+}$ (i.e., $\pi$ is a plan of length at least one). 
        W.l.o.g., we may restrict our attention to plans $\pi \in \setof{a_{\ell}}{\ell \in (I \setminus K)}^{+}$ such that $\truthset{\model}{\psi_j} \subseteq \stexec(\pi)$.
        This is possible for two reasons. 
        On the one hand, if $\truthset{\model}{\psi_j} \nsubseteq \stexec(\pi)$ then $\pi$ cannot be a witness plan for $\kh(\psi_j,\chi_j)$.
        On the other hand, if $\pi$ contains an action $a_\ell$ with $\ell \notin (I \setminus K)$, then $\R_{\pi} = \emptyset$ as $\R_{a_\ell} = \emptyset$. Since $\truthset{\model}{\psi_j} \neq \emptyset$, we have $\truthset{\model}{\psi_j} \nsubseteq \stexec(\pi)$, contradicting the assumption that $\truthset{\model}{\psi_j} \subseteq \stexec(\pi)$.
        
        So, choose $\pi = a_s \dots a_t \in \setof{a_{\ell}}{\ell \in (I \setminus K)}^{+}$ with $\truthset{\model}{\psi_j} \subseteq \stexec(\pi)$.
        We claim that if $\truthset{\model}{\psi_j} \subseteq \stexec(\pi)$, 
        then ${(s,t) \in \C{\bar{D}}}$ and $\R_{\pi}(\truthset{\model}{\psi_s}) = \truthset{\model}{\chi_t}$.
        As a consequence, let us see why $\R_{\pi}(\truthset{\model}{\psi_j}) \subseteq \truthset{\model}{\chi_j}$ cannot hold. 
        By assumption and definition of $\pi$, $\emptyset \neq 
        \truthset{\model}{\psi_j} \subseteq \stexec(\pi) \subseteq \truthset{\model}{\psi_s}$.
        This implies $\truthset{\nmodel}{\E(\psi_j \land  \lnot \psi_s)} = \emptyset$.
        Note that $\truthset{\model}{\theta_D} = \S$ 
        implies \ieqn{\truthset{\nmodel}{\E(\psi_j \land  \lnot \psi_s) \lor \E(\chi_t \land \lnot\chi_j)} = \S}, and, therefore, $\truthset{\nmodel}{\E(\chi_t \land \lnot\chi_j)} = \S$. 
        Consequently, $\truthset{\model}{\chi_t} \nsubseteq \truthset{\model}{\chi_j}$. However by $\R_{\pi}(\truthset{\model}{\psi_s}) = \truthset{\model}{\chi_t}$, $\emptyset \neq 
        \truthset{\model}{\psi_j} \subseteq \truthset{\model}{\psi_s}$ and $\truthset{\model}{\psi_j} \subseteq \stexec(\pi)$, we get 
        $\R_{\pi}(\truthset{\model}{\psi_j})  =
        \truthset{\model}{\chi_t}$ too, and, therefore, 
        $\R_{\pi}(\truthset{\model}{\psi_j}) \nsubseteq 
        \truthset{\model}{\chi_j}$.
        This shows that no $\pi \in \setof{a_{\ell}}{\ell \in (I \setminus K)}^{+}$ that is strongly executable from all states in $\truthset{\model}{\psi_j}$ can serve as a witness for establishing $\truthset{\model}{\kh(\psi_j,\chi_j)} = \S$.
        \cut{
        So, choose $\pi = a_s \dots a_t \in \setof{a_{\ell}}{\ell \in (I \setminus K)}^{+}$ with $\truthset{\model}{\psi_j} \subseteq \stexec(\pi)$.
        {\color{Cyan}
        We claim that if $\truthset{\model}{\psi_j} \subseteq \stexec(\pi)$, then, ${(s,t) \in \C{\bar{D}}}$ and $\R_{\pi}(\truthset{\model}{\psi_s}) = \truthset{\model}{\chi_t}$.
        }%
        If this claim holds, then, $\truthset{\model}{\chi_t} \nsubseteq \truthset{\model}{\chi_j}$.
        To see why, first, note that $\truthset{\model}{\theta_D}$ implies \ieqn{\truthset{\nmodel}{\E(\psi_j \land  \lnot \psi_s) \lor \E(\chi_t \land \lnot\chi_j)} = \S}.
        By assumption and definition of $\pi$, $\truthset{\model}{\psi_j} \subseteq \stexec(\pi) \subseteq \truthset{\model}{\psi_s}$.
        This implies $\truthset{\nmodel}{\E(\psi_j \land  \lnot \psi_s)} = \emptyset$.
        Thus, it must be that $\truthset{\model}{\chi_t} \nsubseteq \truthset{\model}{\chi_j}$.
        Otherwise, \ieqnref0 would be false in $\nmodel$.
        This shows that no $\pi \in \setof{a_{\ell}}{\ell \in (I \setminus K)}^{+}$ that is strongly executable from all the states in $\truthset{\model}{\psi_j}$, can serve as a witness for obtaining ${\truthset{\model}{\kh(\psi_j,\chi_j)} {=} \S}$.
        }
    \end{description}

    Combining the cases $\pi = \varepsilon$ and $\pi \neq \varepsilon$, we can readily see that for every $j \in J$ it is impossible to construct a witness plan for $\kh(\psi_j,\chi_j)$ in $\model$. We conclude that $\truthset{\model}{\varphi^{-}} = \S$.

    We complete the $\Leftarrow$ direction of the lemma by proving the claim: if $\truthset{\model}{\psi_j} \subseteq \stexec(\pi)$, then ${(s,t) \in \C{\bar{D}}}$ and $\R_{\pi}(\truthset{\model}{\psi_s}) = \truthset{\model}{\chi_t}$.
    More precisely, we prove that for all $\pi = a_s \dots a_t \in \setof{a_{\ell}}{\ell \in (I \setminus K)}^{+}$, if $\truthset{\model}{\psi_j} \subseteq \stexec(\pi)$, then $(s,t) \in \C{\bar{D}}$ and $\R_{\pi}(\truthset{\model}{\psi_s}) = \truthset{\model}{\chi_t}$.
    We proceed by induction on the length of $\pi$.

    \begin{description}
    \item[Base case ($|\pi| = 1$):]
    Let $\pi = a_\ell$ with ${\ell \in (I \setminus K)}$.
    Suppose that $\truthset{\model}{\psi_j} \subseteq \stexec(\pi)$.
    Note that, in this case, $s = \ell = t$.
    The reflexivity of $\C{\bar{D}}$ yields ${(s, t) \in \C{\bar{D}}}$.
    Moreover, since $\emptyset \neq \truthset{\model}{\psi_j} \subseteq \truthset{\model}{\psi_s}$, the definition of $\R_{a_\ell}$ ensures $\R_{a_\ell}(\truthset{\model}{\psi_j}) = \truthset{\model}{\chi_t}$.

    \item[Inductive step ($|\pi| > 1$):] Let $\pi' = a_s \dots a_u$, and ${\pi=\pi'a_t}$ with $\pi \in \setof{a_{\ell}}{\ell \in (I \setminus K)}^{+}$.
    Suppose
        $\truthset{\model}{\psi_j} \subseteq \stexec(\pi)$.
    Then, ${\truthset{\model}{\psi_j} \subseteq \stexec(\pi')}$.
    By the induction hypothesis, $(s,u) \in \C{\bar{D}}$ and $\R_{\pi'}(\truthset{\model}{\psi_j}) = \truthset{\model}{\chi_u}$.
    From the assumption that $\truthset{\model}{\psi_j}\subseteq\stexec(\pi)$, we conclude $\truthset{\model}{\chi_u} \subseteq \truthset{\model}{\psi_t}$.
    This means that $(u,t) \notin D$, and so $(u,t) \in \bar{D}$.
    From the transitivity of $\C{\bar{D}}$, $(s,t) \in \C{\bar{D}}$.
    Further, from $\R_{\pi'}(\truthset{\model}{\psi_j}) = \truthset{\model}{\chi_u}$
    with $\truthset{\model}{\chi_u} \neq \emptyset$ (see above), 
    $\truthset{\model}{\chi_u} \subseteq \truthset{\model}{\psi_t}$, and the definition of $\R_{a_t}$, we get $\R_{\pi}(\truthset{\model}{\psi_j}) = \truthset{\model}{\chi_t}$.
    
    \end{description}

    Bringing together all the preceding steps, we conclude that the satisfiability of $\theta$ entails the satisfiability of $\varphi$. 
\end{proof}

\cut{
From the previous developments we can conclude:

\begin{corollary}
    \label{cor:equisat-translation}
    For all conjunctions of atoms in $\KHlogic$, 
    one can compute in exponential-time 
    an $\SFive$ subjective formula $\theta$ such that $\varphi$ is satisfiable iff $\theta$ is satisfiable, $\theta$ is in DNF and each disjunct has polynomial size.  
\end{corollary}
}

\subsection{The satisfiability algorithm}

We have incrementally introduced a translation from a class of formulas in
$\KHlogic$ to formulas in $\SFive$ that preserves satisfiability.
Next, we use this translation to provide an optimal procedure for
solving the satisfiability problem for $\KHlogic$.


\begin{theorem}
    \label{th:satkh-np}
    ${\sf SAT}(\KHlogic)$ is in $\NP$.
\end{theorem}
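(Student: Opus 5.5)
The plan is to give a nondeterministic polynomial-time procedure that reduces an arbitrary input $\varphi$ to a subjective conjunction of atoms. It then guesses a polynomial-size certificate for the $\Alogic$ formula of \Cref{lemma:combination}. There are two preliminary normalizations. First, we flatten nesting of $\kh$. For each $\kh$-subformula $\kh(\psi,\chi)$ whose arguments are $\kh$-free, Lemma~\ref{lemma-subjective-fragment} says its truth value is global. So the algorithm guesses a truth value $b \in \{\top,\bot\}$ for every $\kh$-subformula occurrence, proceeding innermost first. It replaces that occurrence by $b$ inside the larger formula, and records the atom $\kh(\psi,\chi)$ if $b=\top$, or $\lnot\kh(\psi,\chi)$ if $b=\bot$. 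After at most $\size{\varphi}$ guesses we obtain a $\kh$-free formula $\varphi_0$ together with a conjunction $\varphi_1$ of atoms whose arguments are $\kh$-free. Then $\varphi$ is satisfiable iff some guess makes $\E\varphi_0 \wedge \varphi_1$ satisfiable. This holds because in any model each $\kh$-subformula is globally true or false, so the correct guess exists; conversely, a model of $\E\varphi_0\wedge\varphi_1$ makes every guessed value correct, so it satisfies $\varphi$ somewhere.

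Second, we handle the extra conjunct $\E\varphi_0$. It is itself a negative atom, since $\E\varphi_0 \equiv \lnot\kh(\varphi_0,\bot)$. We therefore add it to $J$, so the problem becomes exactly of the shape of \Cref{lemma:combination}, with $\card{I\cup J}\le\size{\varphi}+1$. By that lemma, satisfiability is equivalent to satisfiability of $\theta=\theta^+\wedge\theta^-\wedge\bigvee_{D\subseteq I\times I}\theta_D$. This formula has exponentially many disjuncts, so we never write it down. Instead the algorithm guesses $D\subseteq I\times I$, which has polynomial size, and computes $\C{\bar D}$ in polynomial time. It then guesses, for each $i\in I$, which disjunct of $\A\lnot\psi_i\vee\E\chi_i$ holds, and for each $j\in J$ and $(s,t)\in\C{\bar D}$, which of the two $\E$-disjuncts holds. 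This yields a conjunction $\A\alpha\wedge\bigwedge_{k}\E\beta_k$, where $\alpha$ is the conjunction of the chosen $\lnot\psi_i$ and there are at most cubically many $\E$-conjuncts $\beta_k$, each of linear size.

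Finally, a conjunction $\A\alpha\wedge\bigwedge_{k=1}^{m}\E\beta_k$ of $\kh$-free formulas is satisfiable iff each $\alpha\wedge\beta_k$ is propositionally satisfiable. In that case the states are $m$ valuations (or one arbitrary valuation of $\alpha$ if $m=0$; with $m=0$ and $\alpha$ unsatisfiable the formula is unsatisfiable). So the certificate also includes one propositional valuation, restricted to the variables of $\varphi$, for each $k$, and the verifier checks these in polynomial time. Equivalently, we may invoke NP-membership of ${\sf SAT}(\Alogic)$ on the polynomial-size guessed disjunct. Every guessed object is polynomial in $\size{\varphi}$, so ${\sf SAT}(\KHlogic)\in\NP$; as a byproduct, we obtain a model with polynomially many states via the construction in \Cref{lemma:combination}.

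The main obstacle I expect is the first step: justifying soundly that guessing the truth values of nested $\kh$-subformulas and substituting them preserves satisfiability. The key is that every $\kh$-formula denotes $\S$ or $\emptyset$ in a fixed model, so replacing it by the constant it evaluates to changes no denotation. The subtle point is that the guessed atoms must be processed innermost first, so that their arguments are genuinely $\kh$-free when recorded.
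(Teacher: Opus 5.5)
Your proposal is correct and follows essentially the same route as the paper. You guess truth values of innermost $\kh$-atoms to flatten the formula, absorb the residual propositional part as the negative atom $\lnot\kh(\varphi_0,\bot)$, and then guess $D$ and one disjunct of the formula $\theta$ from \Cref{lemma:combination} without ever writing $\theta$ down. The only difference is cosmetic: you certify the final $\A/\E$-conjunction with one propositional valuation per $\E$-conjunct (each checked against $\alpha$), whereas the paper guesses a small model via Ladner's Lemma~6.1 and model-checks it; these amount to the same argument.
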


\begin{proof}
It was already established in~\cite[Prop.~4]{ACCFS23} that nested occurrences of $\kh$ can be eliminated via a preprocessing step.
Here, we refine this approach by exploiting the nondeterministic nature of our procedure, which allows us to obtain a formula in the precise syntactic form required by~\Cref{lemma:combination} directly, and thus eliminates the need for a separate preprocessing stage.
Let $\varphi$ be an {\em arbitrary} $\KHlogic$ formula. To check its satisfiability status, we first build a formula $\varphi^+\wedge\varphi^-$ such that $\varphi^+$ is a conjunction of positive atoms, and
$\varphi^-$ is a conjunction of negative atoms.
More precisely, 
the decision procedure is made of three successive nondeterministic steps, each running in
polynomial-time, followed
by a final model-checking step
that runs in deterministic polynomial-time. 
\begin{description}
\item[(Guess a conjunction of atoms)] From the formula
$\varphi$, we guess a conjunction of atoms as follows. 
Let $\varphi_{cur}$ be an $\KHlogic$ formula,
$\varphi^{+}_{cur}$ be a conjunction of positive atoms,
$\varphi^{-}_{cur}$ be a conjunction of negative atoms, 
and $\kh(\aformulabis, \aformulater)$ be a positive atom occurring
in $\varphi_{cur}$.
The formula $\varphi_{cur} \wedge \varphi^{+}_{cur} \wedge \varphi^{-}_{cur}$
is satisfiable 
iff at least one formula below is satisfiable:
\begin{itemize}
\item $\varphi_{cur}[\top / \kh(\aformulabis, \aformulater)]
\wedge (\varphi^{+}_{cur} \wedge \kh(\aformulabis, \aformulater))
\wedge \varphi^{-}_{cur}$. 
\item $\varphi_{cur}[{\perp}/ \kh(\aformulabis, \aformulater)]
\wedge \varphi^{+}_{cur} 
\wedge (\varphi^{-}_{cur} \wedge \neg \kh(\aformulabis, \aformulater))$. 
\end{itemize}

Above, $\varphi_{cur}[\top / \kh(\aformulabis, \aformulater)]$
is obtained from $\varphi_{cur}$ by replacing the occurrences of
$\kh(\aformulabis, \aformulater)$
by $\top$ (similar reading with $\perp$). This process reduces by
one the number
of positive atoms 
within
$\varphi_{cur}[\top / \kh(\aformulabis, \aformulater)]$
and
$\varphi_{cur}[{\perp}/ \kh(\aformulabis, \aformulater)]$.
Hence, starting from $\varphi_{cur} = \varphi$,
$\varphi^{+}_{cur} = \top$ and $\varphi^{-}_{cur} = \top$, 
we can repeat the process of selecting one alternative
(a linear number of times in $\size{\varphi}$), so that we reach
final values $\varphi_{cur},\varphi^{+}_{cur},\varphi^{-}_{cur}$
such that $\varphi_{cur}$ is a propositional formula.
One can easily show that
$\varphi_{cur} \wedge \varphi^{+}_{cur} \wedge \varphi^{-}_{cur}$ is
satisfiable iff
$\varphi^{+}_{cur} \wedge (\varphi^{-}_{cur} \wedge \neg \kh(\varphi_{cur},
\perp))$. By correctness of the syntactic transformation,
we have the guarantee that our initial formula $\varphi$ 
is satisfiable iff one of the formulas obtained by this iterative
process of the form
$\varphi^{+}_{cur} \wedge (\varphi^{-}_{cur} \wedge \neg \kh(\varphi_{cur},
\perp))$ is satisfiable. 
We write~$\theta$
to denote the $\SFive$ formula, following the
notations from \Cref{lemma:combination}, 
with $\varphi$ 
being the final conjunction of atoms
$\varphi^{+}_{cur} \wedge (\varphi^{-}_{cur} \wedge \neg \kh(\varphi_{cur},
\perp))$ that is guessed in polynomial-time.
Note that the total number of atoms is bounded above
by $\size{\aformula}$ and we do not mean to construct~$\theta$, which
would be prohibitive to get $\NP$-membership.

\item[(Guess a disjunct in the DNF of $\theta$)] Let us guess a disjunct
$\theta^{\dag}$ for some
DNF of $\theta$, which amounts to guess also
some $D \subseteq I \times I$, see the expression for $\theta$
in \Cref{lemma:combination}.
The atoms in $\theta^{\dag}$ can be explicitly
enumerated as follows. 
\begin{itemize}
\item For all $i \in I$, $\theta^{\dag}$ contains
$\A(\lnot \psi_i)$ or $\E\chi_i$
and for all $j \in J$, $\theta^{\dag}$ contains
$\E(\psi_j \land \lnot\chi_j)$.
\item For all $(t,s) \in D$, $\theta^{\dag}$ contains
$\E(\chi_{t} \land \lnot\psi_{s})$.
\item For all $j \in J$ and $(s,t) \in \C{\bar{D}}$, $\theta^{\dag}$ contains
$\E(\psi_j \land  \lnot \psi_s)$ or $\E(\chi_t \land \lnot\chi_j)$.
\end{itemize}
Since $\card{I \uplus J} \leq \size{\varphi}$ and
$\max(\card{D},\card{\C{\bar{D}}}) \leq \card{I}^2$, we conclude that
$\theta^{\dag}$ contains at most $3 \size{\varphi}^3$ atoms
(rough upper bound) and its size is polynomial in $\size{\varphi}$. 
\item[(Guess a small model)] By~\cite[Lemma~6.1]{Ladner77}, any satisfiable $\SFive$
conjunction $\aformulabis$ made of
$\E$-formulas and $\A$-formulas has a model
$\nmodel = \tup{\S,\V}$ such that $\card{\S} \leq n + 1$
 where $n$ is the number of $\E$-formulas in
$\aformulabis$. Guess a model $\nmodel^{\dag} = \tup{\S,\V}$ with
$\card{\S} \leq (3 \size{\varphi}^3)+1$. 
\end{description}
Since the model-checking problem for S5 is in $\Poly$, checking whether
$\truthset{\nmodel^{\dag}}{\theta^{\dag}} \neq \emptyset$ can also be performed
in time polynomial in $\size{\varphi}$, whence the $\NP$-membership.
\end{proof}

From the $\NP$-hardness of propositional logic we obtain: 

\begin{corollary}
    ${\sf SAT}(\KHlogic)$ is $\NP$-complete.
\end{corollary}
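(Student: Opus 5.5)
The corollary combines two bounds. The upper bound is exactly \Cref{th:satkh-np}, so all that remains is $\NP$-hardness of ${\sf SAT}(\KHlogic)$. I will get hardness by a trivial polynomial-time reduction from propositional satisfiability, which is $\NP$-hard by the Cook--Levin theorem.

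The reduction maps a propositional formula $\aformula$ over $\PROP$ to itself, regarded as a $\kh$-free $\KHlogic$ formula. This is allowed because \Cref{def:khsyntax} contains all propositional connectives, and the map is computable in linear time. It remains to check that satisfiability is preserved in both directions.

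\emph{From a valuation to a model.} Suppose a valuation $v$ satisfies $\aformula$. Build the one-state $\KHlogic$ model $\model=\tup{\set{s},\Rfamily,\V}$ with $\R_a=\emptyset$ for every $a$, and let $s\in\V(p)$ iff $v(p)$ is true. For $\kh$-free formulas the clauses of \Cref{def:khsemantics} are the usual Boolean ones. A straightforward induction on $\aformula$ therefore gives $s\in\truthset{\model}{\aformula}$, so $\truthset{\model}{\aformula}\neq\emptyset$.

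\emph{From a model to a valuation.} Suppose $s\in\truthset{\model}{\aformula}$ for some $\KHlogic$ model $\model$. Define $v(p)$ to be true iff $s\in\V(p)$. The same induction shows that $v$ satisfies $\aformula$.

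Hence ${\sf SAT}(\KHlogic)$ is $\NP$-hard. Combined with \Cref{th:satkh-np}, it is $\NP$-complete. There is no genuine obstacle here: the only care needed is that the reduction uses just the Boolean part of the language, so relations and plans play no role. The substantive work is already contained in \Cref{th:satkh-np}.
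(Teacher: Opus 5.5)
Your proposal is correct and matches the paper's argument: membership is \Cref{th:satkh-np}, and hardness comes from the $\NP$-hardness of propositional satisfiability, since propositional formulas are $\kh$-free $\KHlogic$ formulas. Your one-state-model construction and the induction just make explicit the routine equisatisfiability check that the paper leaves implicit.
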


We conclude this section with an example illustrating how to translate $\KHlogic$ formulas to finally invoke an $\Alogic$ solver.

\begin{example}
Let $\varphi$ be $\kh(p\wedge q, r\wedge t) \vee \kh(p,r)$. We apply our procedure step by step:
\begin{enumerate}
    \item Replace $\kh(p\wedge q, r\wedge t)$ by $\top$ and $\kh(p,r)$ by $\bot$. Thus, we get: $\varphi_{cur} = \top \vee \bot$, $\varphi^{+}_{cur} = \kh(p\wedge q, r\wedge t)$, and $\varphi^{-}_{cur} = \neg \kh(p,r)$.
    \item We have then $D=\set{(1,1)}$ and $\C{\bar{D}}= \set{(1,1)}$. The formula $\theta$ will contain two disjuncts, which is satisfiable, and hence, so is $\varphi$.
\end{enumerate}
\end{example}

\subsection{Small model property}
As a consequence of the proof of \Cref{th:satkh-np},
we can also establish the following polynomial-size model property.

\begin{corollary}\label{corollary-quadratic-size-model-property}
For every satisfiable $\KHlogic$ formula $\aformula$, there is 
$\model = \tup{\S,\Rfamily,\V}$ 
s.t. 
$\truthset{\model}{\aformula} \neq \emptyset$ and
$\card{\S}$ is in $\mathcal{O}(\size{\aformula}^3)$. 
\end{corollary}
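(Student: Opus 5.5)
We obtain the small model by tracing the nondeterministic procedure in the proof of \Cref{th:satkh-np} and reading off the model built in the $(\Leftarrow)$ direction of \Cref{lemma:combination}. Let $\aformula$ be satisfiable. By the correctness of the guessing step, there is a conjunction of atoms $\aformula' = \varphi^{+}_{cur} \wedge \varphi^{-}_{cur} \wedge \neg\kh(\varphi_{cur},\perp)$ that is satisfiable. Moreover, every model of $\aformula'$ is a model of $\aformula$. This holds because the replacement of a positive atom by $\top$ (resp.\ $\perp$) is sound in any model where that atom is globally true (resp.\ false), by \Cref{lemma-subjective-fragment}. The final conjunct $\neg\kh(\varphi_{cur},\perp)$, i.e.\ $\E\varphi_{cur}$, then guarantees that the propositional residue $\varphi_{cur}$ holds at some state. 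The number of atoms of $\aformula'$ is at most $\size{\aformula}$, so $\card{I}+\card{J} \leq \size{\aformula}$.

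By \Cref{lemma:combination}, the associated $\SFive$ formula $\theta$ is satisfiable. Hence some disjunct $\theta^{\dag}$ of its DNF is satisfiable; this disjunct corresponds to a fixed $D \subseteq I\times I$ and fixed choices in the disjunctions. As counted in the proof of \Cref{th:satkh-np}, $\theta^{\dag}$ is a conjunction of $\A$- and $\E$-formulas with at most $3\size{\aformula}^3$ $\E$-conjuncts. By \cite[Lemma~6.1]{Ladner77}, $\theta^{\dag}$ has a model $\nmodel^{\dag} = \tup{\S,\V}$ with $\card{\S} \leq 3\size{\aformula}^3+1$. Since $\nmodel^{\dag}$ satisfies $\theta^{\dag}$, it satisfies $\theta^{+}\wedge\theta^{-}\wedge\theta_D$ for that $D$, and therefore satisfies $\theta$.

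We then apply the $(\Leftarrow)$ construction of \Cref{lemma:combination} to $\nmodel^{\dag}$. It keeps the same state set $\S$ and valuation $\V$, and only adds the relations $\R_{a_\ell}$ for $\ell \in I\setminus K$ (all other relations empty). The resulting $\KHlogic$ model $\model$ therefore still has $\card{\S} \in \mathcal{O}(\size{\aformula}^3)$ and satisfies $\aformula'$, hence $\truthset{\model}{\aformula}\neq\emptyset$.

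The main point to check carefully is the first step: that a model of the guessed conjunction is genuinely a model of the original $\aformula$, rather than merely equisatisfiable with it. We handle this by induction along the sequence of substitutions, using \Cref{lemma-subjective-fragment} to justify replacing each globally true or false atom by $\top$ or $\perp$ at every state. The second point is that the construction in \Cref{lemma:combination} reuses the state space unchanged, which is immediate from its definition.
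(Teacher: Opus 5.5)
Your proposal is correct and takes the paper's route: the corollary is read off from the proof of \Cref{th:satkh-np}, using the Ladner-sized model (at most $3\size{\aformula}^3+1$ states) of the guessed disjunct $\theta^{\dag}$ and adding the relations from the $(\Leftarrow)$ direction of \Cref{lemma:combination}, which leaves the state set unchanged. You also verify that a model of the guessed conjunction of atoms actually satisfies $\aformula$, using truth-set-preserving substitution of globally true or false atoms together with the conjunct $\neg\kh(\varphi_{cur},\perp)$; the paper leaves this step implicit, and you handle it correctly.
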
 

It is worth noting that \Cref{corollary-quadratic-size-model-property} alone is
not sufficient to obtain $\NP$ membership, since the model-checking problem
for $\KHlogic$ is $\PSPACE$-complete~\cite[Th.~1]{DF23}.
Nevertheless, we do not exclude that additional constraints on the models
arising from our previous proofs could yield a subclass of models for which
model checking is in $\Poly$, just as the one-step model-checking for models obtained via bisimulation contractions in~\cite{AFM26}. This would allow one to recover $\NP$
membership via this route.

%% file: final.tex
We characterized the exact complexity of the satisfiability problem for the
seminal knowing-how logic $\KHlogic$~\cite{Wang15lori}.
As a by-product, we obtained a polynomial-size model property for the logic,
which had remained elusive until now. 
Our main result establishes that the satisfiability problem is $\NP$-complete,
improving on the $\Sigma^P_2$ upper bound given in~\cite{ACCFS23}.
We streamline and simplify the argument by relying on an
equisatisfiability-preserving translation into the modal logic S5.
Surprisingly, our proof shows that witness plans and strong executability play
no role in satisfiability.
This observation explains the gap between the $\NP$ bound for satisfiability
and the $\PSPACE$-completeness of the model-checking problem~\cite{DF23}.


As future work, it would be interesting to investigate whether our proof
strategy can be extended to characterize the complexity of other knowing-how
logics, such as those incorporating intermediate constraints
(see, e.g.,~\cite{LiWang17}). In addition, the key ideas developed in our proof readily translate into an effective satisfiability-checking procedure for $\KHlogic$. In this respect, the next step is to provide an implementation for automated reasoning with $\KHlogic$, together with an empirical evaluation to evaluate its practical applicability.
